\theoremstyle{thmstyleone}%
\newtheorem{Thm}{Theorem}
\newtheorem{Lem}{Lemma}
\newtheorem{Cor}{Corollary}
\theoremstyle{thmstyletwo}%
\theoremstyle{thmstylethree}%
\begin{document}

\title[Necessary and sufficient condition for hysteresis in the mathematical model of the cell type regulation of \textit{Bacillus subtilis}]{Necessary and sufficient condition for hysteresis in the mathematical model of the cell type regulation of \textit{Bacillus subtilis}}


\author*[1]{\fnm{Sohei} \sur{Tasaki}}\email{tasaki@math.sci.hokudai.ac.jp}

\author[2]{\fnm{Madoka} \sur{Nakayama}}\email{nakayama.madoka@tmd.ac.jp}

\author[3]{\fnm{Izumi} \sur{Takagi}}\email{i.takagi@tohoku.ac.jp}

\author[4]{\fnm{Jun-ichi} \sur{Wakita}}\email{wakita@phys.chuo-u.ac.jp}

\author[5]{\fnm{Wataru} \sur{Shoji}}\email{wshoji@gmail.com}

\affil*[1]{\orgdiv{Department of Mathematics, Faculty of Science}, \orgname{Hokkaido University}, \orgaddress{\street{Kita 10, Nishi 8, Kita-ku}, \city{Sapporo}, \postcode{0600810}, \state{Hokkaido}, \country{Japan}}}

\affil[2]{\orgdiv{Institute for Liberal Arts}, \orgname{Institute of Science Tokyo}, \orgaddress{\street{2-8-1 Konodai}, \city{Ichikawa}, \postcode{2720827}, \state{Chiba}, \country{Japan}}}

\affil[3]{\orgdiv{Mathematical Institute}, \orgname{Tohoku University}, \orgaddress{\street{6-3 Aramaki-aza-Aoba, Aoba-ku}, \city{Sendai}, \postcode{9808579}, \state{Miyagi}, \country{Japan}}}

\affil[4]{\orgdiv{Department of Physics}, \orgname{Chuo University}, \orgaddress{\street{1-13-27 Kasuga}, \city{Bunkyo-ku}, \postcode{1128551}, \state{Tokyo}, \country{Japan}}}

\affil[5]{\orgdiv{Frontier Research Institute for Interdisciplinary Sciences (FRIS)}, \orgname{Tohoku University}, \orgaddress{\street{6-3 Aramaki-aza-Aoba, Aoba-ku}, \city{Sendai}, \postcode{9808579}, \state{Miyagi}, \country{Japan}}}


\abstract{The key to a robust life system is to ensure that each cell population is maintained in an appropriate state. 
In this work, a mathematical model is used to investigate the control of the switching between the migrating and non-migrating states of the \textit{Bacillus subtilis} cell population. 
In this case, the motile cells and matrix producers are the predominant cell types in the migrating cell population and non-migrating state, respectively, and can be suitably controlled according to the environmental conditions and cell density information. 
A minimal smooth model consisting of four ordinary differential equations is used as the mathematical model to control the \textit{B. subtilis} cell types. 
Furthermore, the necessary and sufficient conditions for the hysteresis, which pertains to the change in the pheromone concentration, are clarified. 
In general, the hysteretic control of the cell state enables stable switching between the migrating and growth states of the \textit{B. subtilis} cell population, thereby facilitating the biofilm life cycle. 
The results of corresponding culture experiments are examined, and the obtained corollaries are used to develop a model to input environmental conditions, especially, the external pH. 
On this basis, the environmental conditions are incorporated in a simulation model for the cell type control. In combination with a mathematical model of the cell population dynamics, a prediction model for colony growth involving multiple cell states,  including concentric circular colonies of \textit{B. subtilis}, can be established.}

\keywords{\textit{Bacillus subtilis}, Cell type regulation, Hysteresis, Stability}


\pacs[MSC Classification]{34A34, 34L30, 92B05, 92B25}

\maketitle

\section{Introduction}
\label{sec:intro}
Cellular state diversity is the source of the morphology and function of life systems. Even in prokaryotes, the robust structure of bacterial biofilms can be attributed to the heterogeneous presence of various types of cells \citep{RN367,RN369,RN229,RN378,RN342,RN323,RN341,RN373}. Among such cells, \textit{Bacillus subtilis} is considered to be the master of differentiation, as it can exhibit extremely diverse cell types \citep{RN106}. The cellular diversity helps develop a diverse colony morphology \citep{RN238} and complex biofilm structure, supporting long-term survival and growth in response to environmental variations \citep{RN355}. In this study, we consider the following mathematical model that describes the cell type regulation of \textit{B. subtilis} \citep{RN389}:
\begin{equation}
\label{eq:SHAC}
    \left\{
    \begin{alignedat}{2}
	\frac{dS}{dt} &= \frac{c_S H}{a_S + b_S C} - d_S S \\
	\frac{dH}{dt} &= \frac{c_H}{a_H + b_H A} - d_H H \\
        \frac{dA}{dt} &= \frac{c_A}{a_A + b_A S} - d_A A  \\
        \frac{dC}{dt} &= c_C X A - d_C C 
    \end{alignedat}
    \right.
\end{equation}
In this model, the cell states are described by four variables $S=S(t)$, $H=H(t)$, $A=A(t)$ and $C=C(t)$, each of which represents a group of cooperating genes and their products (Fig.~\ref{fig:1}A). Specifically,
$S$ is a group represented by Spo0A$\sim$P (including the phosphorelay of Spo0F, Spo0B, and Spo0A); $H$ and $A$ correspond to SigH and AbrB, respectively; and $C$ is a group represented by ComK (driven by the ComX-ComP-ComA pathway). 

Here, $a_{\ast}$,  $b_{\ast}$, $c_{\ast}$, $d_{\ast}$ $(\ast = S, H, A, C )$ are all positive constants; $a_S$ represents the baseline inhibition rate of $S$, $b_S$ represents the inhibition rate of $S$ by $C$, and $c_S$ represents the activation rate of $S$ by $H$. $a_H$ represents the baseline inhibition rate of $H$, $b_H$ represents the inhibition rate of $H$ by $A$, and $c_H$ represents the baseline activation rate of $H$. 
$a_A$ represents the baseline inhibition rate of $A$, $b_A$ represents the inhibition rate of $A$ by $S$, and $c_A$ represents the baseline activation rate of $A$.
$c_C$ represents the expression rate of $C$ activated by $A$ (and $X$, explained later). 
$d_S$, $d_H$, $d_A$, and $d_C$ represent the decay rates of $S$, $H$, $A$, and $C$, respectively.

\begin{figure*}[tbp]
\begin{center}
\includegraphics[width=135mm]{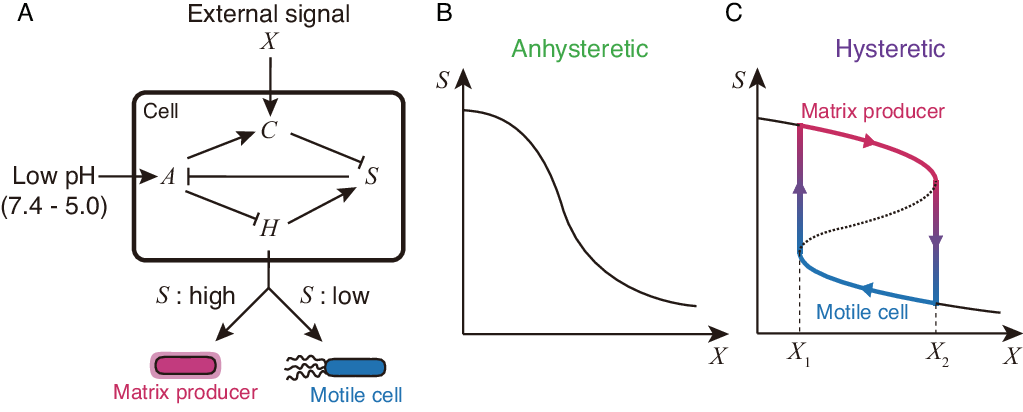}
\end{center}
\caption{Model to determine the response of the cells and cell populations to the environmental pH. 
(A) Model for the cell type selection in response to the environmental pH and cell density. 
(B and C) Two types of cell type controls. Anhysteretic (B) and hysteretic (C).}
\label{fig:1}
\end{figure*}

The output of this system is the cell type, which corresponds to a matrix producer and motile cell when $S$ is high and low, respectively (Fig.~\ref{fig:1}A). Sporulation is initiated when $S$ exhibits continuously high values; however, this aspect is not considered in this work. Instead, this work is focused on examining the switch between the migrating (planktonic) and non-migrating (biofilm) states \citep{RN263,RN343,RN327,RN106,RN325}. The inputs of the cell state control system include the external environmental conditions and auto-inducing signals that represent the cell density. Among such signals, one corresponds to a small peptide ComX secreted by \textit{B. subtilis} cells. In the following text, the concentration of this peptide is denoted as $X$, and it is an external parameter input from $C$ to the cell type regulation system \eqref{eq:SHAC}. 

When cells are motile, they disperse spatially, reducing cell density and causing $X$ to decrease. 
When cells proliferate without moving, cell density increases, causing $X$ to rise. Additionally, $X$ diffuses through space. Mathematical modeling of such dynamics of $X$ and coupling them effectively with the current cell type control model could yield a mathematical model (possibly described by PDEs) for colony pattern formation involving different cell types, though this topic exceeds the scope of this paper. 
Here, two biological hierarchies exist: the cellular level (gene regulation level) and the tissue level (colony formation level). Integrating phenomena at different hierarchical levels requires novel and technically challenging approaches.

The curve of the set of equilibrium points of \eqref{eq:SHAC} can be divided into two types pertaining to the increase and decrease in $X$, indicating that the cell state (for example, $S$) is monotonic and non-monotonic (Figs.~\ref{fig:1}B and C), respectively. 
In the latter case, the curve is an S-shaped curve with two turning points. Specifically, there exist two cases in which the cell type control is not hysteretic (Fig.~\ref{fig:1}B) and hysteretic (Fig.~\ref{fig:1}C), which pertains to the increase/decrease in the cell density signal $X$, respectively. In general, the hysteretic control is necessary to facilitate the biofilm life cycle or concentric colony formation. When there is hysteresis regarding the increase or decrease of $X$ with respect to the choice of cell state, a life cycle for the cell population occurs as follows: (Migration phase) The cell density information $X$ at the growth front of the cell population decreases as it disperses in the motility state, and the state switches to the matrix-production state when the concentration falls below a certain threshold $X_1$. (Growth phase) As the cell population grows and matures in the matrix production state, $X$ increases, and when it exceeds a certain threshold $X_2 \left( >X_1 \right)$, the state switches to the motility state. 
In this regard, the objective of this study is to classify the presence or absence of hysteresis in the selection of such cell states through certain parameters.
Furthermore, the mechanism to control the state of the cells and cell populations in response to environmental conditions is discussed.

\section{Results}
\label{sec:results}
\subsection{Necessary and sufficient condition for hysteretic cell type regulation}
\label{subsec:thm}
The main theorem described herein is a mathematical and formal claim. In this context, the meaning of the parameters may be difficult to understand. This type of unbiased form of writing is intended to facilitate the subsequent testing of two different interpretations.

To describe the results, first, the steady state hysteresis is defined. 
A set of steady states is considered to be anhysteretic if the steady state is unique to $X$, and the steady state $S$ decreases monotonically with respect to $X$ (Fig.~\ref{fig:1}B). 
In contrast, a set of steady states is considered to be hysteretic if the steady state is not unique to $X$, and the steady state $S$ is an (inverse) S-shaped curve (Fig.~\ref{fig:1}C). 
When the steady state is anhysteretic, any equilibrium point on the curve $X=X(S)$ of the set of steady states is stable. 
In comparison, when the steady state is hysteretic, any equilibrium point between the two folding points ($X'(S)=0$) is unstable, and the outer equilibrium point is stable (\ref{sec:stability}).

The parameters for the classification can be defined as follows: 
\begin{equation}
\label{eq:parameters0}
\begin{aligned}
D &= a_A - \frac{b_H c_A}{a_H d_A}, \quad 
\hat{P} = \frac{a_H b_H a_A c_A d_A}{\left( a_H a_A d_A - b_H c_A \right)^2}, \\
\hat{Q} &= \frac{1}{2} \left\{ \frac{c_S c_H b_A d_A}{a_S d_S d_H \left( b_H c_A - a_H a_A d_A \right)} - 1 \right\}. 
\end{aligned}
\end{equation}
In this case, the following holds true. 

\begin{Thm}
\label{thm:1}
If $D \geq 0$, the set of steady states is anhysteretic (Fig.~\ref{fig:1}B). 
If $D < 0$, 
any $\hat{P}$ admits $\hat{Q}_{\rm{c}}(\hat{P})$ such that 
\begin{enumerate}
\item If $\hat{Q} \leq \hat{Q}_{\rm{c}}(\hat{P})$, the set of steady states is anhysteretic (Fig.~\ref{fig:1}B). 
\item If $\hat{Q} > \hat{Q}_{\rm{c}}(\hat{P})$, the set of steady states is hysteretic (Fig.~\ref{fig:1}C). 
\end{enumerate}
Moreover, the threshold $\hat{Q}_{\rm{c}} = \hat{Q}_{\rm{c}}(\hat{P})$ satisfies the following (Fig.~\ref{fig:2}). 
\begin{equation}
\label{eq:qcprofile}
\hat{Q}_{\rm{c}}'(\hat{P}) >0, \quad 
\hat{Q}_{\rm{c}}''(\hat{P}) >0, \quad 
	\lim_{\hat{P}\rightarrow +0} \frac{\hat{Q}_{\rm{c}}(\hat{P})}{\hat{P}^{2/3}} = \frac{3}{2}, \quad 
	\lim_{\hat{P}\rightarrow +\infty} \frac{\hat{Q}_{\rm{c}}(\hat{P})}{\hat{P}^{2}} = \frac{27}{2} . 
\end{equation}
\end{Thm}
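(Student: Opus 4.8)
The plan is to collapse the four-dimensional steady-state system to a single scalar relation $X = X(S)$ and then read hysteresis off the shape of that curve. At equilibrium the last three equations of \eqref{eq:SHAC} give $A$, $H$, $C$ explicitly in terms of $S$ (and $X$): $A = c_A/[d_A(a_A + b_A S)]$, then $H = c_H/[d_H(a_H + b_H A)]$, and solving the first equation for $C$ together with $C = c_C X A/d_C$ yields $X$ as an explicit rational function of $S$. Writing $u = a_A + b_A S$, this $X(u)$ is a cubic over a quadratic; substituting $u = -Dv$ (legitimate precisely when $D<0$) and using the definitions \eqref{eq:parameters0}, I expect the expression to collapse, up to a positive multiplicative constant, to
\[ F(v) = \frac{v\left(-v^2 + 2\hat Q v + \hat P\right)}{v^2 + v - \hat P}, \]
so that the shape of the steady-state curve depends only on $\hat P$ and $\hat Q$. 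The case $D \ge 0$ is separated off at the start and disposed of by a direct sign computation showing $dX/dS < 0$ on the physical range $S>0$, giving a unique monotone branch (anhysteretic).

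The heart of the matter is the fold structure of $F$. Hysteresis is present exactly when $X(S)$ is non-monotone, i.e. when $F'(v) = 0$ has two solutions on the physical branch. Computing $F'(v)$, its numerator is a quartic $G(v) = v^4 + 2v^3 - 2(\hat P + \hat Q)v^2 + 4\hat P\hat Q\, v + \hat P^2$. The key simplification is that $G$ is linear in $\hat Q$, so the fold condition can be solved explicitly:
\[ \hat Q = \hat Q(v;\hat P) := \frac{v^4 + 2v^3 - 2\hat P v^2 + \hat P^2}{2v(v - 2\hat P)}, \qquad G(v) = 2v(v-2\hat P)\left[\hat Q(v;\hat P) - \hat Q\right]. \]
Thus the fold points at a given $\hat Q$ are the preimages $\{v : \hat Q(v;\hat P) = \hat Q\}$, and the number of folds changes precisely at the critical values of the auxiliary function $v \mapsto \hat Q(v;\hat P)$. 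On the physical branch $v > 2\hat P$ this function tends to $+\infty$ at both ends and has a single interior local minimum; I define $\hat Q_{\rm c}(\hat P)$ to be that minimum value. Then $\hat Q \le \hat Q_{\rm c}$ gives at most a tangency (anhysteretic) while $\hat Q > \hat Q_{\rm c}$ gives two transverse folds (hysteretic), which is the dichotomy in the theorem. A short additional check confirms that both folds lie on the physically admissible part of the curve ($S>0$, $X>0$), so that crossing $\hat Q_{\rm c}$ genuinely creates the S-shaped branch.

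It remains to establish the profile \eqref{eq:qcprofile}. Since $\hat Q_{\rm c}(\hat P) = \hat Q(v_*(\hat P);\hat P)$ where $v_*$ solves $\partial_v \hat Q(v_*;\hat P) = 0$, the envelope relation $\hat Q_{\rm c}'(\hat P) = \partial_{\hat P}\hat Q(v_*;\hat P)$ reduces the sign of $\hat Q_{\rm c}'$ to a sign computation at the minimizer, and one further differentiation, using $dv_*/d\hat P$ obtained by implicitly differentiating $\partial_v \hat Q = 0$, yields $\hat Q_{\rm c}''$. The asymptotics follow from dominant balance in $\hat Q(v;\hat P)$: as $\hat P\to\infty$ the minimizer scales as $v_* \sim 3\hat P$ with $\hat Q(s\hat P;\hat P) \sim \frac{s^3}{2(s-2)}\hat P^2$, whose minimum over $s$ at $s=3$ gives the constant $27/2$; as $\hat P\to 0^+$ the minimizer scales as $v_* \sim \hat P^{2/3}$ with $\hat Q(s\hat P^{2/3};\hat P) \sim \bigl(s + \tfrac{1}{2s^2}\bigr)\hat P^{2/3}$, whose minimum over $s$ at $s=1$ gives $3/2$. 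The main obstacle is the global sign control: because $v_*(\hat P)$ is only defined implicitly (as a root of a quintic), the monotonicity $\hat Q_{\rm c}'>0$ and especially the convexity $\hat Q_{\rm c}''>0$ must be extracted from careful sign analysis of the first and second envelope derivatives rather than from a closed form, and this is where the delicate estimates concentrate.
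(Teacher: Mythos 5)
Your proposal follows essentially the same route as the paper's proof: the same reduction of the steady states to a cubic-over-quadratic curve $X(\sigma)$, the same rescaled quartic fold polynomial (your $G(v)=v^4+2v^3-2(\hat P+\hat Q)v^2+4\hat P\hat Q v+\hat P^2$ is exactly the paper's $h$, via its form $h(z)=2\hat Qz(2\hat P-z)+(z^2-\hat P)^2+2z^3$), the same exploitation of linearity in $\hat Q$ through the ratio $\bigl[(z^2-\hat P)^2+2z^3\bigr]/\bigl[2z(z-2\hat P)\bigr]$ (your auxiliary function; the paper encodes its minimum implicitly via the double-root system $F=F_z=0$), the same envelope formulas for $\hat Q_{\mathrm{c}}'>0$, $\hat Q_{\mathrm{c}}''>0$, and the same dominant-balance asymptotics with scalings $z\sim\hat P^{2/3}$ and $z\sim 3\hat P$ yielding $3/2$ and $27/2$. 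The few points you assert rather than prove---the direct sign computation for $D\ge 0$, the uniqueness of the interior minimum (equivalently, that at most one dip of $G$, hence exactly two folds, can occur), and the check that the folds lie on the physical branch---are precisely what the paper settles with its completing-the-square estimate and the endpoint sign conditions $\varphi(p)<0$, $\varphi'(p)<0$, $\varphi(b)<0$, so they are fillable gaps within the same architecture rather than a different method.
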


\begin{figure*}[tbp]
\begin{center}
\includegraphics[width=87mm]{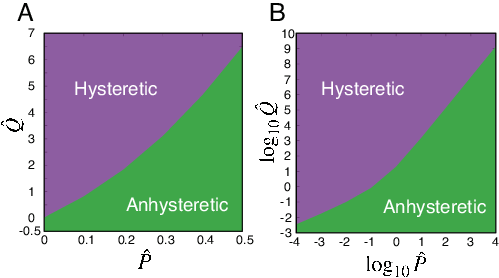}
\end{center}
\caption{Range of hysteresis in the $\hat{P}-\hat{Q}$ plane. 
(A) Normal graph. 
(B) Log-log graph.}
\label{fig:2}
\end{figure*}

\subsection{Environmental pH dependent cell type regulation}
\label{subsec:cor}
Among the structures of steady-state solutions revealed by Theorem~\ref{thm:1}, the exponents describing the asymptotic behavior of the threshold curve, in particular, are thought to reflect the nonlinearity of the model incorporating the characteristics of gene regulatory networks. Understanding this correspondence would be remarkable both mathematically and biologically, but its analysis is challenging and remains a future work. 

A focus of this study is to examine the relation between the cell type cycle generation conditions and the environmental factors,
specifically, to examine the mechanism using which each cell of \textit{B. subtilis} reflects the environmental changes in the control pattern of the cell type.
In this regard, we consider the environmental pH as a sample environmental factor. 
As mentioned previously, when the control of two cell types, motile cells and matrix producers, is hysteretic, a life cycle occurs in the cell population growth process, known as the biofilm life cycle. One of the simplest observations of the cell population life cycle is the formation of concentric colonies \citep{RN235,RN301,RN303,RN149,RN184}. 
This colony growth pattern alternates between growth and migration phases. 
The dominant cell type for colony growth periodically switches between matrix producers and motile cells. 
In other words, the formation of concentric colonies depends on the presence or absence of hysteresis in the cell type control.
In a recent study, the relationship between concentric colony formation and environmental pH was clarified \citep{RN389}. 
It was noted that under appropriate conditions, concentric colonies are formed on a solid nutrient medium containing approximately 0.7\% agar. 
Initially, in the neutral region (pH 6.8--8.0) with an intracellular pH of 7.4 \citep{RN247}, concentric colonies are not formed, and only the growth phase through the matrix production cells is observed (Fig.~\ref{fig:3}A). 
As the environmental pH decreases to approximately 6.8, many extremely short migration phases appear. When the pH is close to this transition point, there exists a considerable variation in space, the periodicity is not clear, and the pattern is considerably different from concentric circles. When the pH reduces to less than 6.5, concentric circular colonies that expand periodically are formed. When the pH is less than 5.3, colony formation becomes unstable and stops halfway, or no colony is formed. Moreover, at a pH less than 5.1, colonies are never formed. 

\begin{figure*}[tbp]
\begin{center}
\includegraphics[width=135mm]{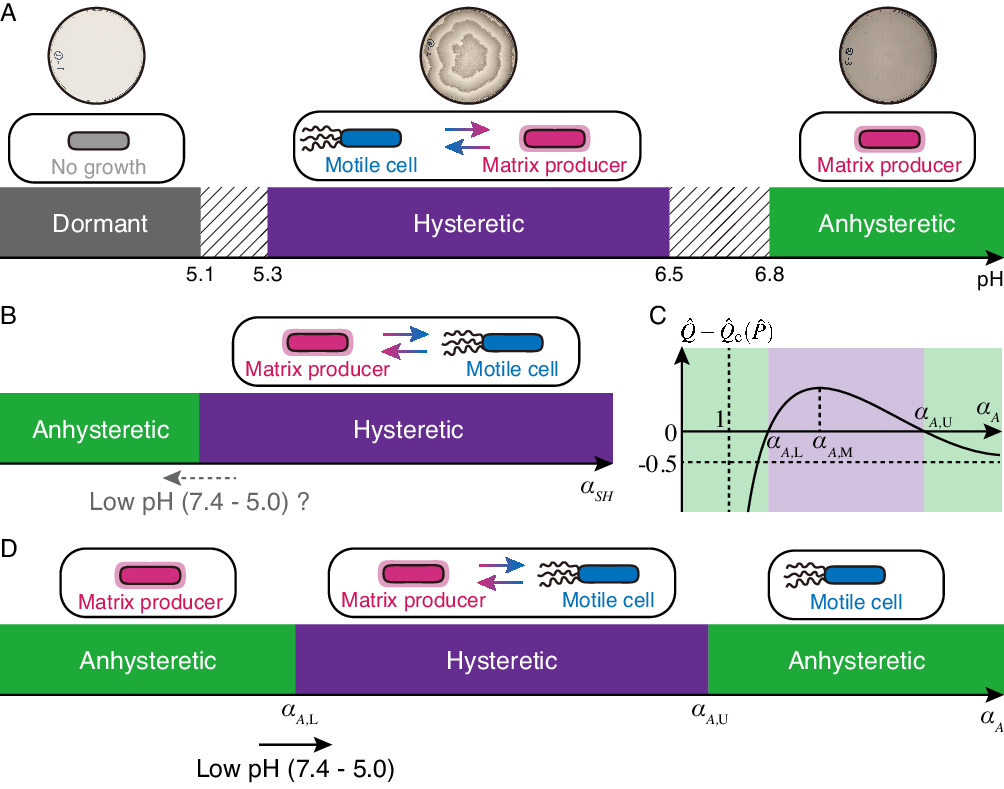}
\end{center}
\caption{Hysteresis and parameter change. 
(A) Hysteresis and environmental pH. The schematic pertains to concentric colony formation experiments \citep{RN389}. 
(B) Hysteresis and external activation of Spo0A$\sim$P and SigH, $\alpha_{SH}$. 
(C) Curve of $\hat{Q}-\hat{Q}_{\rm{c}}(\hat{P})$ vs. $\alpha_A$. 
(D) Hysteresis and external activation of AbrB, $\alpha_A$}. 
\label{fig:3}
\end{figure*}

It is crucial to determine which of the gene (group) variables $S$, $H$, $A$, and $C$ plays a role in environmental susceptibility.   To this end, we will relate the classification parameters $D$, $\hat{P}$, $\hat{Q}$, and $\hat{Q}_{c}$ to the activity levels of $S$, $H$, $A$, and $C$. 
Let $p=a_A$ and $q=\dfrac{b_H c_A}{a_H d_A}$. 
The sign of $D=p-q$ is equivalent to whether $\alpha_A = q/p$ is less than or greater than $1$. 
This parameter $\alpha_A$ can be expressed as 
\begin{alignat*}{2}
\alpha_A &=\frac{b_H}{a_H} \times \frac{c_A / a_A}{d_A} \\
 &= \text{(rate of suppression of $H$ by $A$)} \times \frac{\text{(basic production rate of $A$)}}{\text{(decay rate of $A$)}}
\end{alignat*}
which indicates the rate at which $A$, AbrB, is activated from outside the model system of the cell type regulation \eqref{eq:SHAC}. 
Similarly, we can define the rate at which $S$ and $H$, Spo0A$\sim$P and SigH, are activated from outside the model system as
\begin{alignat*}{2}
\alpha_{SH} &=\frac{b_A}{a_A} \times \frac{c_S / a_S}{d_S} \times \frac{c_H / a_H}{d_H} \\
 &= \text{(rate of suppression of $A$ by $S$)} \\
 & \quad \times \frac{\text{(basic production rate of $S$ activated by $H$)}}{\text{(decay rate of $S$)}} \\
 & \quad \times \frac{\text{(basic production rate of $H$)}}{\text{(decay rate of $H$)}} . 
\end{alignat*}
The classification parameters \eqref{eq:parameters0} can be expressed as
\begin{equation}
\hat{P} = \frac{\alpha_A}{\left( \alpha_A -1 \right)^2} , \quad 
\hat{Q} = \frac{1}{2} \left( \frac{\alpha_{SH}}{\alpha_A - 1} -1 \right) , 
\end{equation}
and the following holds from Theorem~{\ref{thm:1}}. 

\begin{Cor}
\label{cor:1}
If $\alpha_A \leq 1$, the set of steady states is anhysteretic. 
If $\alpha_A > 1$, there exists $\alpha_{SH, \rm{c}}(\alpha_A )$ such that 
\begin{enumerate}
\item If $\alpha_{SH} \leq \alpha_{SH, \rm{c}}(\alpha_A )$, the set of steady states is anhysteretic. 
\item If $\alpha_{SH} > \alpha_{SH, \rm{c}}(\alpha_A )$, the set of steady states is hysteretic. 
\end{enumerate}
\end{Cor}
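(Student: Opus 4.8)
The plan is to recognize Corollary~\ref{cor:1} as a direct reparametrization of Theorem~\ref{thm:1}: the quantities $\alpha_A$ and $\alpha_{SH}$ are smooth functions of the original rate constants, so the whole task is to establish a precise dictionary between the triple $(D,\hat{P},\hat{Q})$ and the pair $(\alpha_A,\alpha_{SH})$, and then transport the classification of Theorem~\ref{thm:1} across it. All rate constants are positive, so $\alpha_A>0$ throughout, and the argument reduces to algebraic substitution followed by a monotone inversion of the threshold.

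First I would substitute the definition $\alpha_A = b_H c_A/(a_H a_A d_A)$, equivalently $b_H c_A = \alpha_A a_H a_A d_A$, into the three classification parameters of~\eqref{eq:parameters0}. This yields the factorizations
\begin{equation*}
D = a_A(1-\alpha_A), \qquad b_H c_A - a_H a_A d_A = a_H a_A d_A(\alpha_A - 1),
\end{equation*}
and, after cancellation and using $\alpha_{SH} = b_A c_S c_H/(a_A a_S d_S a_H d_H)$,
\begin{equation*}
\hat{P} = \frac{\alpha_A}{(\alpha_A-1)^2}, \qquad \hat{Q} = \frac{1}{2}\left(\frac{\alpha_{SH}}{\alpha_A - 1} - 1\right).
\end{equation*}
These are exactly the announced expressions for $\hat{P}$ and $\hat{Q}$, and since $a_A>0$ the sign of $D$ coincides with the sign of $1-\alpha_A$.

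With the dictionary in hand the dichotomy transfers immediately. The case $D\ge 0$ of Theorem~\ref{thm:1} is precisely $\alpha_A \le 1$, which gives the anhysteretic conclusion. In the case $\alpha_A > 1$ (equivalently $D<0$) one has $\alpha_A - 1 > 0$, so for fixed $\alpha_A$ the map $\alpha_{SH}\mapsto \hat{Q}$ is strictly increasing and affine, while $\hat{P}=\alpha_A/(\alpha_A-1)^2$ is determined by $\alpha_A$ alone. The threshold inequality $\hat{Q}>\hat{Q}_{\rm{c}}(\hat{P})$ from Theorem~\ref{thm:1} can therefore be solved for $\alpha_{SH}$; setting
\begin{equation*}
\alpha_{SH,\rm{c}}(\alpha_A) = (\alpha_A - 1)\left(2\,\hat{Q}_{\rm{c}}\!\left(\tfrac{\alpha_A}{(\alpha_A-1)^2}\right) + 1\right)
\end{equation*}
makes $\hat{Q}\le\hat{Q}_{\rm{c}}(\hat{P})$ and $\hat{Q}>\hat{Q}_{\rm{c}}(\hat{P})$ equivalent to $\alpha_{SH}\le\alpha_{SH,\rm{c}}(\alpha_A)$ and $\alpha_{SH}>\alpha_{SH,\rm{c}}(\alpha_A)$, respectively. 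Composing this equivalence with Theorem~\ref{thm:1} yields the stated anhysteretic/hysteretic classification.

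The work here is routine rather than deep. The only points requiring genuine care are verifying the cancellations in the $\hat{P}$ and $\hat{Q}$ identities and confirming that $\alpha_A - 1 > 0$ supplies the monotonicity needed to invert the threshold, so that $\alpha_{SH,\rm{c}}$ is well defined and the inequality directions are preserved. The main obstacle, insofar as there is one, is purely bookkeeping: tracking the many rate constants through the substitution without a sign or factor error. No new analysis beyond Theorem~\ref{thm:1} is needed.
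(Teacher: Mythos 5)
Your proposal is correct and follows essentially the same route as the paper: the paper likewise establishes the identities $\hat{P} = \alpha_A/(\alpha_A-1)^2$ and $\hat{Q} = \tfrac{1}{2}\left(\alpha_{SH}/(\alpha_A - 1) - 1\right)$, notes that $\operatorname{sgn} D = \operatorname{sgn}(1-\alpha_A)$, and then invokes Theorem~\ref{thm:1}. Your version merely makes explicit what the paper leaves implicit, namely the formula $\alpha_{SH,\rm{c}}(\alpha_A) = (\alpha_A - 1)\bigl(2\hat{Q}_{\rm{c}}(\hat{P}) + 1\bigr)$ and the observation that $\alpha_A - 1 > 0$ preserves the inequality directions.
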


According to Corollary~{\ref{cor:1}}, the presence or absence of hysteresis can be controlled by Spo0A$\sim$P or SigH when AbrB is functioning to a certain extent (Fig.~\ref{fig:3}B). 
Therefore, the question is whether the activity of Spo0A$\sim$P or SigH controls the environmental pH-dependent cell type hysteresis.
It is known that SigH is positively regulated with an increase in the environmental pH \citep{RN359,RN179}. 
Therefore, according to Corollary~{\ref{cor:1}}, if SigH is the input point for pH-dependent control, the hysteresis disappears at a low pH (gray dotted arrow in Fig.~\ref{fig:3}B). 
However, this finding contradicts the pH-dependent control phenomenon in actual colony observation (left part in Fig.~\ref{fig:3}A).

Note that AbrB has a stronger pH dependence than SigH \citep{RN179}. 
AbrB is upregulated as the environmental pH decreases at least in the pH 6-7 range. 
Therefore, we consider the possibility that the environmental-pH-dependent AbrB activity controls the cell type hysteresis.
To examine this aspect, we consider the mapping $\alpha_A \mapsto \hat{Q} - \hat{Q}_{\rm{c}}(\hat{P})$ $(\alpha_A \in (1,\infty ) )$. 
If $\alpha_A \rightarrow 1$, then $\hat{P} \rightarrow \infty$, and it follows from Theorem~{\ref{thm:1}} that 
\begin{equation*}
\hat{Q} - \hat{Q}_{\rm{c}}(\hat{P}) \sim \frac{1}{2} \left( \frac{\alpha_{SH}}{\alpha_A - 1} -1 \right) -  \frac{27}{2} \frac{\alpha_A^2}{\left( \alpha_A - 1\right)^4} \rightarrow - \infty . 
\end{equation*}
Similarly, we see that $\hat{P} \rightarrow +0$ as $\alpha_A \rightarrow \infty$. Theorem~{\ref{thm:1}} indicates that 
\begin{equation*}
\hat{Q} - \hat{Q}_{\rm{c}}(\hat{P}) \sim \frac{1}{2} \left( \frac{\alpha_{SH}}{\alpha_A - 1} -1 \right) -  \frac{3}{2} \frac{\alpha_A^{2/3}}{\left( \alpha_A - 1\right)^{4/3}}  \rightarrow - \frac{1}{2} . 
\end{equation*}
Furthermore, according to this theorem
\begin{equation*}
\frac{\partial}{\partial \alpha_A} \hat{Q}_{\rm{c}}'(\hat{P}) 
= -\hat{Q}_{\rm{c}}''(\hat{P}) \frac{\alpha_A + 1}{\left( \alpha_A - 1 \right)^3} <0 . 
\end{equation*}
Therefore, when $\alpha_A  \in (1,\infty )$ increases, $\hat{Q}_{\rm{c}}'(\hat{P})$ is positive and decreases monotonically. 
Consequently,
\begin{equation*}
\frac{\partial}{\partial \alpha_A} \left( \hat{Q} - \hat{Q}_{\rm{c}}(\hat{P}) \right) 
= \frac{1}{\left( \alpha_A - 1 \right)^2} \left\{ 
\left( 1+\frac{2}{\alpha_A - 1} \right) \hat{Q}_{\rm{c}}'(\hat{P}) - \frac{\alpha_{SH}}{2} 
\right\} 
\end{equation*}
changes its sign only once from positive to negative.  
In other words, there exists an $\alpha_{A,{\rm M}} \in (1,\infty )$ such that $\alpha_A \mapsto \hat{Q} - \hat{Q}_{\rm{c}}(\hat{P})$ increases when $\alpha_A < \alpha_{A,{\rm M}}$ and decreases when $\alpha_A > \alpha_{A,{\rm M}}$ (Fig.~\ref{fig:3}C). 

Accordingly, Theorem~{\ref{thm:1}} can be expressed as follows. 
\begin{Cor}
\label{cor:2}
There exists an $\alpha_{SH}^{\ast} > 0$ such that
\begin{enumerate}
\item if $\alpha_{SH} \leq \alpha_{SH}^{\ast}$, the set of steady states is anhysteretic; 
\item if $\alpha_{SH} > \alpha_{SH}^{\ast}$, there exist $1 < \alpha_{A,{\rm L}} < \alpha_{A,{\rm U}}$ such that 
\begin{enumerate}
\renewcommand{\labelenumii}{(\roman{enumii})}
\item if $0 < \alpha_A \leq \alpha_{A,{\rm L}}$, the set of steady states is anhysteretic. 
\item if $\alpha_{A,{\rm L}} < \alpha_A < \alpha_{A,{\rm U}}$, the set of steady states is hysteretic. 
\item if $\alpha_{A,{\rm U}} \leq \alpha_A$, the set of steady states is anhysteretic. 
\end{enumerate}
\end{enumerate}
\end{Cor}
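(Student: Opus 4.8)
The plan is to reduce the whole statement to the sign of the scalar indicator $G(\alpha_A) := \hat{Q} - \hat{Q}_{\rm c}(\hat{P})$, viewed as a function of $\alpha_A \in (1,\infty)$ with $\alpha_{SH}$ entering as a parameter. By Corollary~\ref{cor:1} the set of steady states is anhysteretic whenever $\alpha_A \le 1$, so only $\alpha_A > 1$ is in play, and there Theorem~\ref{thm:1} makes it hysteretic exactly when $G(\alpha_A) > 0$ and anhysteretic when $G(\alpha_A) \le 0$. Since $\hat{Q} = \tfrac12\bigl(\tfrac{\alpha_{SH}}{\alpha_A-1}-1\bigr)$ while $\hat{Q}_{\rm c}(\hat{P})$ depends on $\alpha_A$ only through $\hat{P} = \alpha_A/(\alpha_A-1)^2$ and not at all on $\alpha_{SH}$, the decisive observation is that $G$ is affine and strictly increasing in $\alpha_{SH}$, with $\partial G/\partial\alpha_{SH} = 1/\bigl(2(\alpha_A-1)\bigr) > 0$ for every $\alpha_A > 1$.

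I would then invoke the analysis already carried out before the statement: for each fixed $\alpha_{SH}$ the map $\alpha_A \mapsto G(\alpha_A)$ is unimodal on $(1,\infty)$, with $G \to -\infty$ as $\alpha_A \to 1^+$, $G \to -\tfrac12$ as $\alpha_A \to \infty$, and a single interior maximiser $\alpha_{A,{\rm M}}$. Define $M(\alpha_{SH}) := \sup_{\alpha_A>1} G(\alpha_A)$; whenever $M > -\tfrac12$ this supremum is attained at the interior point $\alpha_{A,{\rm M}}$, and the envelope theorem then gives $M'(\alpha_{SH}) = 1/\bigl(2(\alpha_{A,{\rm M}}-1)\bigr) > 0$, so $M$ is continuous and strictly increasing. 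Its endpoint behaviour is $M(\alpha_{SH}) \to -\tfrac12 < 0$ as $\alpha_{SH} \to 0^+$ (since $G \le -\tfrac12$ identically when $\alpha_{SH}=0$) and $M(\alpha_{SH}) \to +\infty$ as $\alpha_{SH} \to \infty$ (evaluate $G$ at any fixed $\alpha_A$). Monotonicity together with the intermediate value theorem then produces a unique $\alpha_{SH}^{\ast} > 0$ with $M(\alpha_{SH}^{\ast}) = 0$, and $M < 0$ for $\alpha_{SH} < \alpha_{SH}^{\ast}$, $M > 0$ for $\alpha_{SH} > \alpha_{SH}^{\ast}$.

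The two cases then read off from the level set $\{G > 0\}$. When $\alpha_{SH} > \alpha_{SH}^{\ast}$ we have $M > 0$, and since $G$ is continuous, unimodal, and strictly negative at both ends, the intermediate value theorem on each side of $\alpha_{A,{\rm M}}$ shows that $\{\alpha_A > 1 : G(\alpha_A) > 0\}$ is a bounded open interval $(\alpha_{A,{\rm L}},\alpha_{A,{\rm U}})$ with $1 < \alpha_{A,{\rm L}} < \alpha_{A,{\rm M}} < \alpha_{A,{\rm U}} < \infty$; combining hysteresis inside this interval, anhysteresis for $1 < \alpha_A \le \alpha_{A,{\rm L}}$ and for $\alpha_A \ge \alpha_{A,{\rm U}}$, and anhysteresis on $(0,1]$ from Corollary~\ref{cor:1}, yields precisely the three regimes (i)--(iii) of part~2. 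When $\alpha_{SH} \le \alpha_{SH}^{\ast}$ we have $M \le 0$, so $G(\alpha_A) \le 0$ for all $\alpha_A > 1$ and the folding window $\{G > 0\}$ is empty, which fixes the global regime of part~1.

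The step I expect to be most delicate is the control of $M(\alpha_{SH})$ near the threshold: one has to confirm that the maximiser $\alpha_{A,{\rm M}}$ is an interior critical point depending continuously (indeed smoothly) on $\alpha_{SH}$, so that the envelope formula and the strict monotonicity of $M$ are rigorously justified exactly where $M$ crosses zero. This rests on the single sign change of $\partial_{\alpha_A}(\hat{Q}-\hat{Q}_{\rm c})$ already established, together with the two endpoint asymptotics $\hat{Q}_{\rm c}(\hat{P}) \sim \tfrac{27}{2}\hat{P}^2$ as $\hat{P}\to\infty$ and $\hat{Q}_{\rm c}(\hat{P}) \sim \tfrac32\hat{P}^{2/3}$ as $\hat{P}\to 0$ from Theorem~\ref{thm:1}, which are what drive $G$ below zero at both ends and hence guarantee the finiteness of $\alpha_{A,{\rm L}}$ and $\alpha_{A,{\rm U}}$. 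The remaining arguments are routine applications of continuity and the intermediate value theorem.
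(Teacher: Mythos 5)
Your proposal takes essentially the same route as the paper: the paper's entire proof of Corollary~\ref{cor:2} is the analysis immediately preceding its statement --- the limits $\hat{Q}-\hat{Q}_{\rm c}(\hat{P})\to-\infty$ as $\alpha_A\to1^+$ and $\to-\tfrac12$ as $\alpha_A\to\infty$, plus the single sign change of $\partial_{\alpha_A}\bigl(\hat{Q}-\hat{Q}_{\rm c}(\hat{P})\bigr)$ from positive to negative --- and you invoke exactly those facts. What you add, correctly, is the step the paper leaves implicit: the existence and uniqueness of $\alpha_{SH}^{\ast}$, via $M(\alpha_{SH})=\sup_{\alpha_A>1}G$, its strict monotonicity, its endpoint limits, and the intermediate value theorem. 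One simplification: the envelope theorem is heavier than needed. Since $G$ is pointwise strictly increasing in $\alpha_{SH}$ for each fixed $\alpha_A>1$, strict monotonicity of $M$ is immediate by evaluating at the old maximiser, and continuity of $M$ follows because it is a supremum of functions affine in $\alpha_{SH}$, hence convex; the delicate smooth dependence of $\alpha_{A,{\rm M}}$ on $\alpha_{SH}$ that you single out as the hard step can be bypassed entirely.

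Two caveats. First, your conclusion in case 1 is that the steady states are \emph{anhysteretic} for $\alpha_{SH}\le\alpha_{SH}^{\ast}$ (the set $\{G>0\}$ is empty), which is the negation of the printed statement, which says ``hysteretic''. The printed word is evidently a typo --- it contradicts Corollary~\ref{cor:1} (small $\alpha_{SH}$ forces $\alpha_{SH}\le\alpha_{SH,\rm{c}}$) and the discussion around Fig.~\ref{fig:3}D --- and your argument establishes the corrected claim; but you should flag explicitly that you are proving the opposite of part~1 as written, rather than silently ``fixing the global regime''. Second, you inherit (rather than introduce) a gap in the unimodality input: by Theorem~\ref{thm:1}'s own asymptote $\hat{Q}_{\rm c}(\hat{P})\sim\tfrac32\hat{P}^{2/3}$ as $\hat{P}\to0$, one has $G+\tfrac12\approx\tfrac{\alpha_{SH}}{2(\alpha_A-1)}-\tfrac32\alpha_A^{-2/3}<0$ for large $\alpha_A$, so $G$ approaches $-\tfrac12$ \emph{from below} and must eventually increase; hence $\partial_{\alpha_A}G$ cannot change sign exactly once, and relatedly $\hat{Q}_{\rm c}''>0$ globally is incompatible with the $\hat{P}^{2/3}$ behaviour at the origin (a convex function vanishing at $0^+$ has $\hat{Q}_{\rm c}(\hat{P})/\hat{P}$ nondecreasing). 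The corollary survives because on that tail $G<-\tfrac12<0$, so $\{G>0\}$ is still a bounded interval inside the initial hump, but a careful write-up should note this rather than quote the single-sign-change claim verbatim.
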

This indicates that when $\alpha_{SH}$ is sufficiently large, that is, when Spo0A$\sim$P and SigH function to a reasonable extent, $\alpha_A$ or AbrB can control the hysteresis of the cell type selection (Fig.~\ref{fig:3}D).
Moreover, the cell type control is (i) anhysteretic (matrix producers dominant) when $\alpha_A$ is small; (ii) hysteretic (periodic) when $\alpha_A$ is intermediate; (iii) anhysteretic (motile cells dominant) when $\alpha_A$ is large. 
In terms of the effect of the environmental pH, $\alpha_A$ exhibits a negative correlation in the neutral range, that is, AbrB is upregulated at a low pH \citep{RN179}. 
In this case, the change in the cell type control from anhysteretic (matrix producer) to hysteric (periodic) with decreasing pH (Fig.~\ref{fig:3}A) can be explained by the upregulation of $\alpha_A$ (Fig.~\ref{fig:3}D). 
Therefore, it is suggested that AbrB plays a central role in cell type regulation in response to environmental pH changes. 
In colony formation, it is expected that the activity of AbrB is the key to the selection of the concentric pattern.

\section{Hysteretic and anhysteretic curves: Proof of Theorem~{\ref{thm:1}}}
\label{sec:proof}
Herein, we present the proof of Theorem~{\ref{thm:1}}. 
The equation for the steady state (equilibrium point) $(S, H, A, C)$ of \eqref{eq:SHAC} is as follows:
\begin{equation}
\label{eq:sSHAC}
    \left\{
    \begin{alignedat}{2}
	0 &= \frac{c_S H}{a_S + b_S C} - d_S S \\
	0 &= \frac{c_H}{a_H + b_H A} - d_H H \\
        0 &= \frac{c_A}{a_A + b_A S} - d_A A  \\
        0 &= c_C X A - d_C C
    \end{alignedat}
    \right.
\end{equation}
This system of equations contain a parameter $X$. 
We note that the set of all steady states together with the parameter $X$ is described by a single variable $\sigma = a_A + b_A S$: 
$(S, H, A, C, X)=(S(\sigma), H(\sigma), A(\sigma), C(\sigma), X(\sigma))$. 
Furthermore, since $\sigma = a_A + b_A S$, it can also be said that the other variables $(H, A, C, X)$ can be expressed in terms of $S$. 
If we eliminate $H$, $A$ and $C$ using these expressions, $X$ can be written as 
\begin{equation}
\label{eq:Xsigma}
\begin{aligned}
X &= \frac{\lambda \sigma^2}{\left( \sigma -p \right) \left( \sigma + q \right) } - \mu \sigma \\
&= -\frac{\sigma}{\left( \sigma -p \right) \left( \sigma + q \right) } \left[ \mu \sigma^2 - \left\{ \lambda + \mu \left( p-q \right) \right\} \sigma - \mu pq \right] 
\end{aligned}
\end{equation}
where $\sigma = a_A + b_A S$, 
\begin{equation}
\label{eq:pqlm}
\begin{aligned}
p &= a_A, &  q &= \frac{b_H c_A}{a_H d_A}, \\
\lambda &= \frac{c_S c_H b_A d_A d_C}{b_S d_S a_H d_H c_A c_C}, &\qquad 
\mu &= \frac{a_S d_A d_C}{b_S c_A c_C} . 
\end{aligned}
\end{equation}
Furthermore, by setting 
\begin{equation}
\mu \sigma^2 - \left\{ \lambda + \mu \left( p-q \right) \right\} \sigma - \mu pq = \mu \left( \sigma + a \right) \left( \sigma -b \right) , 
\end{equation}
$X$ can be expressed as 
\begin{equation}
\label{eq:Xsigma2}
X = -\frac{\mu \sigma \left( \sigma +a \right) \left( \sigma -b \right) }{\left( \sigma -p \right) \left( \sigma + q \right) } ,
\end{equation}
where, 
\begin{equation}
\label{eq:ab}
\begin{aligned}
\kappa &= \frac{\lambda}{\mu} = \frac{c_S c_H b_A}{a_S d_S a_H d_H}, \\
-a &= \frac{1}{2} \left\{ \kappa + p-q - \sqrt{\left( \kappa + p-q \right)^2 +4pq} \right\} <0 , \\
b &= \frac{1}{2} \left\{ \kappa + p-q + \sqrt{\left( \kappa + p-q \right)^2 +4pq} \right\} >0 . 
\end{aligned}
\end{equation}
In addition, the range of $\sigma$ for which $X>0$ is $p<\sigma <b$. 
Differentiating \eqref{eq:Xsigma2} with respect to $\sigma$ yields 
\begin{equation}
\label{eq:dXdsigma}
\frac{dX}{d\sigma} = \frac{\mu}{\left( \sigma -p \right)^2 \left( \sigma + q \right)^2} \varphi (\sigma ) , 
\end{equation}
where
\begin{equation}
\label{eq:varphi}
\begin{aligned}
\varphi (\sigma ) &= 
\sigma \left( \sigma +a \right) \left( \sigma -b \right) \left( 2\sigma +q-p \right) \\
&\quad -\left( \sigma -p \right) \left( \sigma +q \right) \left\{ \left( \sigma +a \right) \left( \sigma -b \right) + \sigma \left( \sigma -b \right) + \sigma \left( \sigma + a \right) \right\} . 
\end{aligned}
\end{equation}
This indicates that
\begin{equation}
\label{eq:varphipb}
\begin{aligned}
\varphi (p) &= p(p+a) (p-b) (p+q) < 0, \\
\varphi (b) &= -b (b-p) (b+q) (b+a) <0. 
\end{aligned}
\end{equation}
According to this expression, and because $\varphi '(p)=-2p^2\kappa < 0$, the quartic function $\varphi = \varphi (\sigma )$ can be classified into the following two types: 
\begin{enumerate}
\item $\varphi (\sigma ) \leq 0$ for $p < \sigma < b$. 
\item There exist $\sigma_{-}$ and $\sigma_{+}$ such that $p< \sigma_{-} < \sigma_{+} < b$, \\
$\varphi (\sigma )>0$ if $\sigma_{-} < \sigma < \sigma_{+}$, \\
$\varphi (\sigma ) \leq 0$ otherwise (i.e., $\sigma \leq \sigma_{-}$ or $\sigma \geq \sigma_{+}$). 
\end{enumerate}
Each case corresponds to an anhysteretic (Fig.~\ref{fig:1}B) and hysteretic case (Fig.~\ref{fig:1}C). 
Expanding \eqref{eq:varphi} yields 
\begin{equation}
\label{eq:varphi2}
\varphi (\sigma ) = 
-\sigma^4 + 2(p-q) \sigma^3 + \left\{ 3pq + (p-q) (a-b) -ab \right\} \sigma^2 +2(a-b) pq\sigma -abpq . 
\end{equation}
Because $b-a=p-q+\kappa$ and $ab=pq$,  
\begin{equation}
\label{eq:varphi3}
\varphi (\sigma ) = 
-\sigma^4 + 2(p-q) \sigma^3 + \left\{ 2pq - (p-q) (p-q+\kappa ) \right\} \sigma^2 -2(p-q+\kappa ) pq\sigma -p^2 q^2 . 
\end{equation}
Furthermore, because $D=p-q$ and $P=pq$, 
\begin{equation}
\label{eq:varphi4}
\varphi (\sigma ) = 
-\sigma^4 + 2D \sigma^3 + \left\{ 2P - D (D+\kappa ) \right\} \sigma^2 -2(D+\kappa ) P\sigma -P^2 . 
\end{equation}
As $p<\sigma <b$, the range of the variable $\sigma$ is 
\begin{equation}
\label{eq:domainofvarphi}
\frac{1}{2} \left( D+\sqrt{D^2+4P} \right) < \sigma < \frac{1}{2} \left( D+\kappa + \sqrt{\left( D+\kappa \right)^2 +4P} \right) . 
\end{equation}
This treatment indicates that the hysteretic or anhysteretic nature of the set of steady states depends on whether the function $\varphi (\sigma )$, defined by \eqref{eq:varphi4} in the range \eqref{eq:domainofvarphi}, does or does not (non-positive) undergo a sign change, respectively.
In the following text, we examine the cases of $D \geq 0$ and $D<0$ separately.

\subsection{No hysteresis under low influence of AbrB}
\label{subsec:AbrBlow}
First, we demonstrate that hysteresis does not occur when AbrB has a low influence, that is, $D\geq 0$ ($\alpha_{A} \leq 1$).
When $D=0$ ($\alpha_{A}=1$), \eqref{eq:domainofvarphi} is reduced to 
\begin{equation}
\label{eq:domainofvarphiDzero}
\sqrt{P} < \sigma < \frac{1}{2} \left( \kappa + \sqrt{\kappa^2 +4P} \right) , 
\end{equation}
and 
\begin{equation}
\label{eq:varphiDzero}
\varphi (\sigma ) = -\left( \sigma^2 - P \right)^2 -2\kappa P \sigma <0 . 
\end{equation}
When $D>0$ ($\alpha_{A}<1$), 
if we scale $P$, $\kappa$, $\sigma$ by $D=-\eta$ so that $P=\eta^2 \hat{P}$, $\kappa = \eta\hat{\kappa}$, $\sigma = \eta\hat{\sigma}$, 
then $\eta <0$, $\hat{P}>0$, $\hat{\kappa}<0$, $\hat{\sigma}<0$, and \eqref{eq:domainofvarphi} becomes 
\begin{equation}
\frac{1}{2} \left( -\eta + \sqrt{\eta^2 + 4\eta^2 \hat{P}} \right) < \eta \hat{\sigma} < \frac{1}{2} \left\{ -\eta + \eta\hat{\kappa} + \sqrt{\left( -\eta + \eta\hat{\kappa} \right)^2 + 4\eta^2\hat{P}} \right\} . 
\end{equation}
Dividing this expression by $\eta (<0)$ yields 
\begin{equation}
\frac{1}{2} \left\{ -1+\hat{\kappa} - \sqrt{\left( 1-\hat{\kappa} \right)^2 +4\hat{P}} \right\} < \hat{\sigma} < \frac{1}{2} \left( -1-\sqrt{1+4\hat{P}
} \right) . 
\end{equation}
Setting $\hat{Q} = \frac{1}{2}(\hat{\kappa} - 1)$, we see that $\hat{\kappa} < 0$ implies $\hat{Q} < -1/2$, and the domain can be expressed as 
\begin{equation}
\hat{Q} -\sqrt{\hat{Q}^2+4\hat{P}} < \hat{\sigma} < \frac{1}{2} \left( -1-\sqrt{1+4\hat{P}} \right). 
\end{equation}

Furthermore, by considering 
\begin{equation}
\label{eq:hofzDp}
h(z) = 
z^4 - 2z^3 - 2\left( \hat{P}+\hat{Q} \right) z^2 -4\hat{P}\hat{Q}z +\hat{P}^2  , 
\end{equation}
we can obtain $\varphi (\sigma ) = -\eta^4 h(-\hat{\sigma})$. 
As $p<\sigma <b$, the range of the variable $z = -\hat{\sigma}$ is 
\begin{equation}
\label{eq:domainofhDp}
\frac{1}{2} \left(  1 + \sqrt{1+4\hat{P}} \right) < z < -\hat{Q}+\sqrt{\hat{Q}^2 + 4\hat{P}}. 
\end{equation}
Therefore, it is sufficient to show that the function $h(z)$, defined by \eqref{eq:hofzDp} in \eqref{eq:domainofhDp}, is positive. 
From $-2\hat{Q}>1$, it follows that 
\begin{equation}
\begin{aligned}
h(z) &= z^2 \left\{ z^2 -2z+\left( -2\hat{Q} \right) \right\} + 2\hat{P}z \left\{ -z+\left( -2\hat{Q} \right) \right\} +\hat{P}^2 \\
 &> z^2 \left( z^2 -2z+1 \right) + 2\hat{P}z \left(-z+1 \right) + \hat{P}^2 \\
 &= z^2 (z-1)^2 -2\hat{P}z(z-1) +\hat{P}^2 \\
 &= \left( z^2 -z-\hat{P} \right)^2  > 0 .
\end{aligned}
\end{equation}
Therefore, no hysteresis occurs.

\subsection{Hysteresis condition under high influence of AbrB}
\label{subsec:AbrBhigh}
Herein, we consider in detail the conditions under which the steady state becomes hysteretic when AbrB has a strong influence, that is, when $D<0$ ($\alpha_{A}>1$). 
Substituting $D=-\eta$, $P=\eta^2 \hat{P}$, $\kappa = \eta\hat{\kappa}$, $\sigma = \eta\hat{\sigma}$ yields 
\begin{equation}
\label{eq:varphi5}
\varphi (\sigma ) = \eta^4 \left\{ 
-\hat{\sigma}^4 - 2 \hat{\sigma}^3 + 2\left( \hat{P}+\hat{Q} \right) \hat{\sigma}^2 -4\hat{P}\hat{Q}\hat{\sigma} -\hat{P}^2  
\right\} , 
\end{equation}
where $\hat{Q}=\frac{1}{2}(\hat{\kappa}-1) > -1/2$ since $\hat{\kappa} > 0$. Hence, considering 
\begin{equation}
\label{eq:hofz}
h(z) = 
z^4 + 2z^3 - 2\left( \hat{P}+\hat{Q} \right) z^2 +4\hat{P}\hat{Q}z +\hat{P}^2  
\end{equation}
yields $\varphi (\sigma ) = -\eta^4 h(\hat{\sigma})$. 
Because $p<\sigma <b$, the range of the variable $z = \hat{\sigma}$ is
\begin{equation}
\label{eq:domainofh}
\frac{1}{2} \left( \sqrt{1+4\hat{P}}-1 \right) < z < \sqrt{\hat{Q}^2 + \hat{P}} + \hat{Q} . 
\end{equation}
Thus, the hysteretic or anhysteretic nature of the set of steady states depends on whether the function $h(z)$, defined as in \eqref{eq:hofz} in the domain \eqref{eq:domainofh}, does or does not (non-negative) undergo a sign change. 

Because
\begin{equation}
\label{eq:hofz2}
h(z) = 
2\hat{Q}z \left( 2\hat{P}-z \right) + \left( z^2 - \hat{P} \right)^2 + 2z^3, 
\end{equation}
$h(z) >0$ for $z \leq 2\hat{P}$. 
In particular, $h$ has a fixed value independent of $\hat{Q}$ at $z = 2\hat{P}$: $h(2\hat{P}) = \hat{P}^2 ( 4\hat{P}+1 )^2$. 
Conversely, in view of 
\begin{equation}
h(z)<0 \quad \Leftrightarrow \quad 
z^4 + 2z^3 - 2\hat{P}z^2 +\hat{P}^2 < 2z \left( z-2\hat{P} \right) \hat{Q}, 
\end{equation}
for each $z>2\hat{P}$, we see that 
\begin{equation}
\label{eq:qtilde}
h(z)<0 \quad \Leftrightarrow \quad 
\hat{Q} > \frac{\left( z^2 -\hat{P} \right)^2 + 2z^3}{2z \left( z-2\hat{P} \right)} =: \widetilde{Q}\left( z; \hat{P} \right) . 
\end{equation}
Hence, $h(z)$ is negative if $\hat{Q}$ is sufficiently large, and the set of steady states is hysteretic. 
In contrast, if $\hat{Q}\leq 0$, $h(z)$ is always non-negative, and the set of steady states is anhysteretic. 
Furthermore, 
\begin{equation}
\frac{\partial h}{\partial \hat{Q}} = -2z \left( z - 2\hat{P} \right)
\end{equation}
is negative if $z>2\hat{P}$. 
Therefore, for each $\hat{P}$, there exists a threshold $\hat{Q}_{\rm{c}}(\hat{P}) > 0$ such that
\begin{enumerate}
\item if $\hat{Q} \leq \hat{Q}_{\rm{c}}(\hat{P})$, the set of steady states is anhysteretic;  
\item if $\hat{Q} > \hat{Q}_{\rm{c}}(\hat{P})$, the set of steady states is hysteretic.  
\end{enumerate}
This threshold $\hat{Q}_{\rm{c}}(\hat{P})$ can be understood as the minimum value of the function  $\widetilde{Q}\left(\cdot; \hat{P} \right) $ given in \eqref{eq:qtilde} as follows. 
For any $\hat{P}>0$, the function $\widetilde{Q}\left( \cdot; \hat{P} \right)=\widetilde{Q}\left( z; \hat{P} \right)$ on $z > 2\hat{P}$ achieves a unique minimum value $\hat{Q}_{\rm{c}}(\hat{P})$ at $z=z_{\rm{c}}(\hat{P})$, which is a unique critical point (Fig.~\ref{fig:4}).
This minimum value $\hat{Q}_{\rm{c}}(\hat{P})$ is precisely the threshold of $\hat{Q}$ for hysteresis. 

\begin{figure*}[tbp]
\begin{center}
\includegraphics[width=72mm]{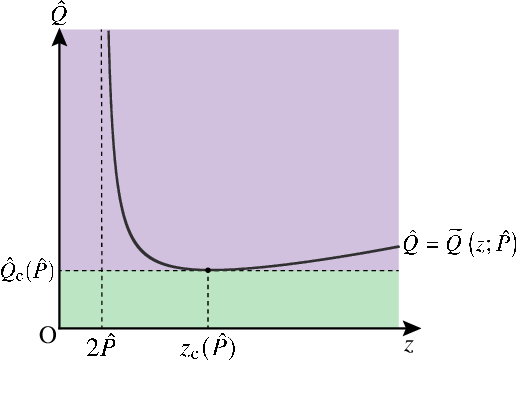}
\end{center}
\caption{Curve $\widetilde{Q}\left( \cdot; \hat{P} \right)$ and the hysteresis threshold $\hat{Q}_{\rm{c}}(\hat{P})$. }. 
\label{fig:4}
\end{figure*}

\subsection{Profile of the hysteresis threshold curve}
\label{subsec:profilehtc}
We present the profile \eqref{eq:qcprofile} of the threshold function $\hat{Q}_{\rm{c}} = \hat{Q}_{\rm{c}}(\hat{P})$ to complete the proof of Theorem~{\ref{thm:1}}. 
First, the threshold curve is characterized as $\left( z, \hat{Q} \right) = \left( z_{\rm{c}}(\hat{P}), \hat{Q}_{\rm{c}} (\hat{P}) \right)$, which satisfies the following two equations for $\hat{P}$:
\begin{align}
\label{Fzero}
F \left( z, \hat{P}, \hat{Q} \right) &= z^4 + 2z^3 - 2\left( \hat{P}+\hat{Q} \right) z^2 +4\hat{P}\hat{Q} z +\hat{P}^2 = 0, \\ 
\label{Fzzero}
F_z \left( z, \hat{P}, \hat{Q} \right) &= 4 z^3 + 6 z^2 -4 \left( \hat{P} + \hat{Q} \right) z + 4 \hat{P} \hat{Q} = 0. 
\end{align}
Then, \eqref{eq:qcprofile} follows from the following three lemmas.

\begin{Lem}
\label{lem:Qcprime}
\begin{equation}
\label{eq:Qcprime}
 \hat{Q}_{\rm{c}}' (\hat{P}) = - \frac{F_{\hat{P}}}{ F_{\hat{Q}} } > 0, \quad 
\hat{Q}_{\rm{c}}'' (\hat{P}) = - \frac{ F_{ \hat{P}\hat{P} } F_{\hat{Q}} - F_{\hat{P}} F_{\hat{Q} \hat{P}} }{F_{\hat{Q}}^2} > 0. 
\end{equation}
Here, $ \left( z_{\rm{c}}(\hat{P}), \hat{P }, \hat{Q}_{\rm{c}} (\hat{P}) \right)$ is omitted on the right side, for example, $F_{\hat{P}} = F_{\hat{P}} \left( z_{\rm{c}}(\hat{P}), \hat{P}, \hat {Q}_{\rm{c}} (\hat{P}) \right)$. 
\end{Lem}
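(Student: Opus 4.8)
The plan is to read the threshold curve as the locus on which the quartic in $z$ acquires a \emph{double} root, i.e.\ the common zero set of \eqref{Fzero} and \eqref{Fzzero}, and to extract the derivative formulas by implicit (envelope) differentiation. Where the double root is simple, so that $F_{zz}\neq 0$ and the Jacobian of $\left(F,F_z\right)$ in $\left(z,\hat{Q}\right)$ is invertible, the implicit function theorem produces smooth branches $z_{\rm{c}}(\hat{P})$ and $\hat{Q}_{\rm{c}}(\hat{P})$. Differentiating the identity $F\!\left(z_{\rm{c}}(\hat{P}),\hat{P},\hat{Q}_{\rm{c}}(\hat{P})\right)\equiv 0$ in $\hat{P}$ gives
\begin{equation*}
F_z\,z_{\rm{c}}' + F_{\hat{P}} + F_{\hat{Q}}\,\hat{Q}_{\rm{c}}' = 0,
\end{equation*}
and since $F_z=0$ on the curve the $z_{\rm{c}}'$ term drops, leaving $\hat{Q}_{\rm{c}}'=-F_{\hat{P}}/F_{\hat{Q}}$ at once.

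Because \eqref{Fzero} is \emph{linear} in $\hat{Q}$, I would first eliminate $\hat{Q}$: solving $F=0$ for $\hat{Q}$ (equivalently, rearranging \eqref{eq:hofz2}) recasts the threshold on the admissible range $z>2\hat{P}$ as the marginal minimum
\begin{equation*}
\hat{Q}_{\rm{c}}(\hat{P})=\min_{z}\, g(z,\hat{P}),\qquad g(z,\hat{P})=\frac{\left(z^2-\hat{P}\right)^2+2z^3}{2z\left(z-2\hat{P}\right)},
\end{equation*}
whose minimizer $z_{\rm{c}}(\hat{P})$ is characterized by $g_z=0$; since $g_z=-F_z/F_{\hat{Q}}$, this is exactly \eqref{Fzzero}. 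The envelope theorem then recovers $\hat{Q}_{\rm{c}}'=g_{\hat{P}}\left(z_{\rm{c}},\hat{P}\right)=-F_{\hat{P}}/F_{\hat{Q}}$, and differentiating once more while using $z_{\rm{c}}'=-g_{z\hat{P}}/g_{zz}$ from $g_z\equiv 0$ yields the value-function identity
\begin{equation*}
\hat{Q}_{\rm{c}}''=g_{\hat{P}\hat{P}}-\frac{g_{z\hat{P}}^{\,2}}{g_{zz}}.
\end{equation*}
This carries the same content as the second formula in \eqref{eq:Qcprime} and is the form I would actually sign-check; the bookkeeping that turns the naive quotient rule into \eqref{eq:Qcprime} is precisely this elimination of the $z_{\rm{c}}'$-contributions through $g_z=0$.

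What remains, and what I expect to be the real work, is the two positivity claims. For $\hat{Q}_{\rm{c}}'>0$ I would compute $g_{\hat{P}}$ explicitly and show it stays positive on the admissible interval for $z_{\rm{c}}$, which \eqref{eq:domainofh} together with the bound $z>2\hat{P}$ from the preceding subsection confines. For $\hat{Q}_{\rm{c}}''>0$, since $g_{zz}>0$ at a minimum, the claim reduces to the Hessian-type inequality $g_{\hat{P}\hat{P}}\,g_{zz}>g_{z\hat{P}}^{\,2}$ evaluated at $z=z_{\rm{c}}$. This is the main obstacle: $g$ is rational and its second partials carry competing signs, so I would clear denominators, substitute the defining relation $g_z=0$ (a polynomial constraint that pins $z_{\rm{c}}$ and lowers the effective degree), and aim to write $g_{\hat{P}\hat{P}}g_{zz}-g_{z\hat{P}}^{\,2}$ as a manifestly positive combination on $z>2\hat{P}$, $\hat{P}>0$. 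Producing that certificate of positivity, rather than the differentiation itself, is where the effort lies.
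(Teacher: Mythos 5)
Your derivation of the first formula is sound and is exactly the paper's route: differentiate $F\left(z_{\rm{c}}(\hat{P}),\hat{P},\hat{Q}_{\rm{c}}(\hat{P})\right)\equiv 0$ in $\hat{P}$, use $F_z=0$ to kill the $z_{\rm{c}}'$ term, and read off the sign from $F_{\hat{Q}}=-2z\left(z-2\hat{P}\right)<0$ together with $F_{\hat{P}}>0$; the latter is the one computation you only sketch, and it does go through (the paper obtains it from $z^2-2\hat{Q}z-\hat{P}<0$, a consequence of the upper bound in \eqref{eq:domainofh}).

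The trouble is the second half, and your proposal in fact exposes it. Your value-function identity $\hat{Q}_{\rm{c}}''=g_{\hat{P}\hat{P}}-g_{z\hat{P}}^{\,2}/g_{zz}$ is the \emph{correct} total second derivative, but it does \emph{not} carry the same content as the second formula of \eqref{eq:Qcprime}. Expanding your identity in terms of $F$ (using $F_{\hat{Q}\hat{Q}}=0$ and $F_z\equiv 0$ along the curve) gives
\begin{equation*}
\hat{Q}_{\rm{c}}''=\frac{F_{zz}\left(z_{\rm{c}}'\right)^2-F_{\hat{P}\hat{P}}-2F_{\hat{P}\hat{Q}}\hat{Q}_{\rm{c}}'}{F_{\hat{Q}}},
\end{equation*}
whereas the right-hand side of \eqref{eq:Qcprime} equals $\left(-F_{\hat{P}\hat{P}}-F_{\hat{P}\hat{Q}}\hat{Q}_{\rm{c}}'\right)/F_{\hat{Q}}$: the paper applies the quotient rule only to the explicit $\hat{P}$-dependence, silently dropping $F_{zz}\left(z_{\rm{c}}'\right)^2$ and one copy of $F_{\hat{P}\hat{Q}}\hat{Q}_{\rm{c}}'$. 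These terms do not vanish; in particular $F_{zz}\left(z_{\rm{c}}\right)\geq 0$ at the double root (it is an interior minimum of the quartic touching zero) and $F_{\hat{Q}}<0$, so the dropped curvature term pushes the true $\hat{Q}_{\rm{c}}''$ strictly below the paper's expression. Hence the paper's sign argument does not transfer to the correct formula, and identifying the two, as you do, is an error.

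Worse, the certificate of positivity you defer to the end --- $g_{\hat{P}\hat{P}}\,g_{zz}>g_{z\hat{P}}^{\,2}$ at the minimizer --- cannot be produced, because global convexity is false. By Lemma~\ref{lem:asymptPzero} (which your own formulation $\hat{Q}_{\rm{c}}(\hat{P})=\min_z g(z,\hat{P})$ reproduces: with $z=\alpha\hat{P}^{2/3}$ one minimizes $\left(1+2\alpha^3\right)/\left(2\alpha^2\right)$ at $\alpha=1$), we have $\hat{Q}_{\rm{c}}(\hat{P})\sim\tfrac{3}{2}\hat{P}^{2/3}$ as $\hat{P}\rightarrow +0$, so $\hat{Q}_{\rm{c}}(0^+)=0$. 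If $\hat{Q}_{\rm{c}}''>0$ held on all of $(0,\infty)$, convexity together with $\hat{Q}_{\rm{c}}(0^+)=0$ would force $\hat{P}\mapsto\hat{Q}_{\rm{c}}(\hat{P})/\hat{P}$ to be nondecreasing, while the asymptotics force $\hat{Q}_{\rm{c}}(\hat{P})/\hat{P}\sim\tfrac{3}{2}\hat{P}^{-1/3}\rightarrow+\infty$ at $+0$ --- a contradiction. A direct check confirms this: numerically $\hat{Q}_{\rm{c}}(0.001)\approx 0.0174$, $\hat{Q}_{\rm{c}}(0.0045)\approx 0.0527$, $\hat{Q}_{\rm{c}}(0.008)\approx 0.0824$, so the middle value lies above the chord and $\hat{Q}_{\rm{c}}$ is concave there. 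So the step you correctly flag as ``the real work'' is not merely hard: no positivity certificate exists for small $\hat{P}$, and convexity can hold at best for large $\hat{P}$. What survives of the lemma under your (correct) formula is $\hat{Q}_{\rm{c}}'>0$; the convexity claim needs to be weakened or restricted, and the paper's proof of it succeeds only by virtue of the incomplete differentiation identity.
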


\begin{Lem}
\label{lem:asymptPzero}
As $\hat{P} \rightarrow 0$, the following holds. 
\begin{equation}
\label{eq:asmptPzero}
\frac{z_{\rm{c}}(\hat{P})}{\hat{P}^{2/3}} \rightarrow 1, \quad 
\frac{\hat{Q}_{\rm{c}}(\hat{P})}{\hat{P}^{2/3}} \rightarrow \frac{3}{2}. 
\end{equation}
\end{Lem}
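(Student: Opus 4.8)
\section*{Proof proposal for Lemma~\ref{lem:asymptPzero}}

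The plan is to reduce everything to a one-variable minimization and then localize. By the argument of Appendix~\ref{subsec:AbrBhigh}, hysteresis occurs exactly when $\hat{Q}$ exceeds the value of a rational function of $z$, so the threshold branch is
\begin{equation*}
\hat{Q}_{\rm{c}}(\hat{P}) = \min_{z > 2\hat{P}} G(z,\hat{P}), \qquad G(z,\hat{P}) = \frac{\left( z^2 - \hat{P} \right)^2 + 2z^3}{2z\,\left( z - 2\hat{P} \right)},
\end{equation*}
attained at $z = z_{\rm{c}}(\hat{P})$; this is precisely the branch cut out by $F=0$, $F_z=0$ in \eqref{Fzero}--\eqref{Fzzero}. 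Since $G(z,0) = \tfrac{1}{2}z^2 + z$ is strictly increasing on $(0,\infty)$, the minimizer must collapse to the origin as $\hat{P}\to 0$, and the whole content of the lemma is to pin down the rate $z_{\rm{c}} \asymp \hat{P}^{2/3}$ together with the two constants.

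First I would pass to the rescaled variables $u = z/\hat{P}^{2/3}$ and $v = \hat{Q}/\hat{P}^{2/3}$. The exponent $2/3$ is forced by the dominant balance in the defining polynomial: the only three terms that can share a common order are $2z^3$, $\hat{Q}z^2$ and $\hat{P}^2$, and $3\cdot\tfrac{2}{3} = \tfrac{2}{3} + 2\cdot\tfrac{2}{3} = 2$ makes them balance at order $\hat{P}^2$. A direct substitution gives
\begin{equation*}
\frac{G\left( \hat{P}^{2/3}u, \hat{P} \right)}{\hat{P}^{2/3}} = \frac{\left( \hat{P}^{1/3}u^2 - 1 \right)^2 + 2u^3}{2u\,\left( u - 2\hat{P}^{1/3} \right)},
\end{equation*}
which converges, uniformly on compact subsets of $(0,\infty)$, to $g(u) := \dfrac{1 + 2u^3}{2u^2} = u + \dfrac{1}{2u^2}$ as $\hat{P}\to 0$.

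Next I would minimize the limit profile. Since $g'(u) = 1 - u^{-3}$ vanishes only at $u=1$, where $g(1) = \tfrac{3}{2}$ and $g''(1) = 3 > 0$, the function $g$ has a unique, nondegenerate minimum on $(0,\infty)$ at $u^{\ast}=1$ with value $\tfrac{3}{2}$. Equivalently, rescaling the stationarity pair \eqref{Fzero}--\eqref{Fzzero} and letting $\hat{P}\to 0$ yields a reduced system in $(u,v)$ whose unique positive root is $(1,\tfrac{3}{2})$ and at which the Jacobian is nonsingular. At the formal level this already reproduces the claimed limits $z_{\rm{c}}/\hat{P}^{2/3}\to 1$ and $\hat{Q}_{\rm{c}}/\hat{P}^{2/3}\to \tfrac{3}{2}$.

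The main obstacle is the rigorous passage from the profile back to the genuine minimizer, i.e. proving that $\min$ and $\lim$ commute. I would settle this by confinement plus uniform convergence. For $u$ near the moving pole $2\hat{P}^{1/3}$ the quotient blows up, for small $u$ it behaves like $1/(2u^2)$, and for large $u$ it grows like $u$; moreover the pole $u = 2\hat{P}^{1/3}$ recedes to $0$. Hence one can fix an interval $[u_{-},u_{+}]\subset(0,\infty)$ containing $1$ on which $g$ is strictly below its outside values, and for all small $\hat{P}$ the rescaled minimizer is trapped in $[u_{-},u_{+}]$, where $G/\hat{P}^{2/3}\to g$ uniformly. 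Uniform convergence of a family on a fixed compact set forces convergence of the minima to $\min g = \tfrac{3}{2}$ and, because $g$ has a unique minimizer, convergence of the minimizers to $u^{\ast}=1$, which is \eqref{eq:asmptPzero}. An equivalent route writes the rescaled equations as a smooth map of $(u,v,s)$ with $s = \hat{P}^{1/3}$, reducing at $s=0$ to the nondegenerate reduced system above; the implicit function theorem then produces a unique smooth branch $(u(s),v(s))\to(1,\tfrac{3}{2})$, which is identified with the threshold branch through the minimality characterization.
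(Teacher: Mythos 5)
Your proposal is correct, and its computational core (rescaling by $\hat{P}^{2/3}$, limit profile, constants $1$ and $\tfrac{3}{2}$) coincides with the paper's, but the logical scaffolding is genuinely different. The paper substitutes $z=\alpha\hat{P}^{2/3}$ directly into \eqref{Fzero}--\eqref{Fzzero}, solves both equations for $\hat{Q}$, equates them, and passes to the limit \emph{under the assumption} that $\alpha(\hat{P})$ converges; the existence of that limit---the only nontrivial analytic point---is dispatched in a single closing sentence appealing to the implicit function theorem. You instead work from the variational characterization $\hat{Q}_{\rm{c}}(\hat{P})=\min_{z>2\hat{P}}G(z,\hat{P})$ implicit in Appendix~\ref{subsec:AbrBhigh}, and obtain convergence of both the minimum and the minimizer from locally uniform convergence of $G/\hat{P}^{2/3}$ to $g(u)=u+1/(2u^2)$ plus a coercivity/confinement estimate (blow-up at the receding pole, growth like $u$ at infinity, like $1/(2u^2)$ near zero). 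This actually \emph{proves} that the limit exists rather than assuming it, and it also identifies which solution of the system $F=0$, $F_z=0$ is the threshold branch (the global minimizer), a point the paper takes for granted; your alternative route via the rescaled system in $(u,v,s)$, $s=\hat{P}^{1/3}$, with unique positive reduced root $(1,\tfrac{3}{2})$ and nonsingular Jacobian, is essentially the rigorous version of the paper's one-line remark. Two details are worth adding. First, identifying $\hat{Q}_{\rm{c}}$ with the minimum over \emph{all} $z>2\hat{P}$ requires checking that the $\hat{Q}$-dependent upper endpoint of the domain \eqref{eq:domainofh} never binds; this is automatic, since $h(z)<0$ with $z>2\hat{P}$ gives $2\hat{Q}z>\left\{ (z^2-\hat{P})^2+2z^3\right\} /z$ and hence $z^2-2\hat{Q}z-\hat{P}<0$, but it deserves a line. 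Second, your $G$ is the one consistent with \eqref{eq:hofz2} and with the derivatives $F_{\hat{P}}$, $F_{\hat{Q}}$ used in Lemma~\ref{lem:Qcprime}; the quadratic coefficient in \eqref{Fzero} should accordingly be read as $-2(\hat{P}+\hat{Q})z^2$ (the printed version drops the factor $2$), otherwise your minimization and the paper's threshold system would not literally match.
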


\begin{Lem}
\label{lem:asymptPinfty}
As $\hat{P} \rightarrow \infty$, the following holds. 
\begin{equation}
\label{eq:asmptPinfty}
\frac{z_{\rm{c}}(\hat{P})}{\hat{P}} \rightarrow 3, \quad 
\frac{\hat{Q}_{\rm{c}}(\hat{P})}{\hat{P}^{2}} \rightarrow \frac{27}{2}. 
\end{equation}
\end{Lem}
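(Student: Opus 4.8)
The plan is to treat the threshold curve as the locus where the quartic $F$ (equivalently $h$) has a double root, i.e. the simultaneous solution of $F=0$ and $F_z=0$ from \eqref{Fzero}--\eqref{Fzzero}, and to extract the $\hat{P}\rightarrow\infty$ behavior by a dominant-balance rescaling. Motivated by the target exponents in \eqref{eq:asmptPinfty}, I would set $w=z/\hat{P}$ and $R=\hat{Q}/\hat{P}^2$ and write $\epsilon=1/\hat{P}$. A quick balance check confirms these exponents: requiring that the terms $z^4$, $\hat{Q}z^2$, and $\hat{P}\hat{Q}z$ in $F$ be of the same order forces $z\sim\hat{P}$ and $\hat{Q}\sim\hat{P}^2$, so no other choice of scaling exponents produces a nondegenerate reduced problem.

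Substituting and dividing $F$ by $\hat{P}^4$ and $F_z$ by $\hat{P}^3$ turns \eqref{Fzero}--\eqref{Fzzero} into two polynomial equations $G_1(w,R,\epsilon)=0$ and $G_2(w,R,\epsilon)=0$ that depend smoothly on $(w,R,\epsilon)$ up to and including $\epsilon=0$, with all $\epsilon$-dependence sitting in lower-order correction terms. At $\epsilon=0$ the reduced system collapses to two cubics in $w$ whose coefficients are linear in $R$. The key simplification is that subtracting one reduced equation from the other eliminates the $w^3$ term and leaves a relation of the form $R(3-w)=0$; discarding the trivial branch $R=0$ (which forces $w=0$ and lies outside the admissible range \eqref{eq:domainofh}) gives $w=3$, and back-substitution into either cubic yields $R=\tfrac{27}{2}$. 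Hence the reduced system has a unique relevant root $(w,R)=(3,\tfrac{27}{2})$, which is precisely \eqref{eq:asmptPinfty}.

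To promote this formal limit to a genuine one I would invoke the implicit function theorem at $(w,R,\epsilon)=(3,\tfrac{27}{2},0)$. The Jacobian $\partial(G_1,G_2)/\partial(w,R)$ there is nonsingular (its determinant is a nonzero constant, even though $\partial G_1/\partial w$ itself vanishes at the point because of the double-root degeneracy), so there is a unique smooth branch $(w(\epsilon),R(\epsilon))$ through it, and by continuity $w(\epsilon)\rightarrow 3$ and $R(\epsilon)\rightarrow\tfrac{27}{2}$ as $\epsilon\rightarrow +0$. Equivalently, one may eliminate $R$ from $G_2=0$, substitute into $G_1=0$, and check that the resulting single equation in $w$ has a simple root at $w=3$ when $\epsilon=0$. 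Either route delivers $z_{\rm{c}}(\hat{P})/\hat{P}\rightarrow 3$ and $\hat{Q}_{\rm{c}}(\hat{P})/\hat{P}^2\rightarrow\tfrac{27}{2}$.

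The main obstacle is the matching step: one must ensure that the smooth branch produced by the implicit function theorem is the actual threshold branch $(z_{\rm{c}},\hat{Q}_{\rm{c}})$ rather than some other double-root of $F$ lying outside the domain \eqref{eq:domainofh}. I would close this gap using the characterization of $\hat{Q}_{\rm{c}}(\hat{P})$ from the previous subsection — namely that it is the unique threshold value in the admissible $z$-range — together with a priori bounds showing that $z_{\rm{c}}/\hat{P}$ and $\hat{Q}_{\rm{c}}/\hat{P}^2$ remain bounded and bounded away from the degenerate values $0$ and $1$ as $\hat{P}\rightarrow\infty$; compactness then forces every limit point to satisfy the reduced system, and the uniqueness of its relevant root upgrades this to full convergence. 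This argument is the mirror image of the one needed for Lemma~\ref{lem:asymptPzero}, with only the scaling exponents changed.
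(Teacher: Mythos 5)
Your proposal is correct and is essentially the paper's own argument: the paper substitutes the same scaling $z=\alpha\hat{P}$ into \eqref{Fzero}--\eqref{Fzzero}, solves both equations for $\hat{Q}$, equates them, and lets $\hat{P}\rightarrow\infty$ to obtain $\alpha_{\infty}=3$ and $\hat{Q}_{\rm{c}}/\hat{P}^{2}\rightarrow 27/2$, then remarks (without the details you supply via the Jacobian computation and the compactness/branch-matching step) that the implicit function theorem justifies the existence of the limit. One caveat worth recording: your reduced system, like the paper's own displayed computation, corresponds to the quartic whose $z^{2}$-coefficient is $-2\left(\hat{P}+\hat{Q}\right)$, of which \eqref{Fzzero} is the exact $z$-derivative; equation \eqref{Fzero} as printed, with coefficient $-\left(\hat{P}+\hat{Q}\right)$, contains a typo, and had you used it literally together with \eqref{Fzzero} your subtraction would have yielded the degenerate relation $3R=0$ rather than $R(3-w)=0$.
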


\begin{proof}[Proof of Lemma~{\ref{lem:Qcprime}}]
By differentiating $F \left( z_{\rm{c}}(\hat{P}), \hat{P}, \hat{Q}_{\rm{c}} (\hat{P}) \right) =0$ with respect to $\hat{P}$, \eqref{eq:Qcprime} can be obtained. 
In particular, we can show that 
\begin{alignat}{2}
F_{\hat{P}} &= -2z^2 + 4 \hat{Q} z + 2 \hat{P} > 0, &\quad F_{\hat{P}\hat{P}} &= 2 > 0, \\ 
F_{\hat{Q}} &= -2z \left( z - 2\hat{P} \right) < 0, &\quad F_{\hat{Q}\hat{P}} &= 4z > 0 
\end{alignat}
for $\left( z, \hat{Q} \right) = \left( z_{\rm{c}}(\hat{P}), \hat{Q}_{\rm{c}} (\hat{P}) \right)$. 
$z - 2\hat{P} >0$ was examined in the previous subsection~\ref{subsec:AbrBhigh}. 
Furthermore, we can obtain $z^2 - 2 \hat{Q} z - \hat{P} < 0$, as follows. 
Assume that $z  \geq \hat{Q}$. According to \eqref{eq:domainofh}, $z - \hat{Q} < \sqrt{\hat{Q}^2 + \hat{P}}$. 
If both sides are squared, we obtain 
$\left( z - \hat{Q} \right)^2 < \hat{Q}^2 + \hat{P}$ as $z - \hat{Q} \geq 0$. 
Rearranging this expression yields $z^2 - 2 \hat{Q} z - \hat{P} < 0$. 
However, when $z  < \hat{Q}$, we have $\hat{Q}>0$ and 
$z^2 - 2 \hat{Q} z - \hat{P} = -z \left( \hat{Q} - z \right) - \hat{Q} z - \hat{P} < 0$. 
\end{proof}

\begin{proof}[Proof of Lemma~{\ref{lem:asymptPzero}}]
Substituting $z=\alpha \hat{P}^{2/3}$ into \eqref{Fzero} and \eqref{Fzzero} yields 
\begin{align}
&\hat{P}^{4/3} \left( \alpha^4 \hat{P}^{4/3} + 2\alpha^3 \hat{P}^{2/3} -2\alpha^2 \hat{P}  -2\alpha^2 \hat{Q} +4\alpha \hat{P}^{1/3} \hat{Q} + \hat{P}^{2/3} \right) = 0, \\ 
&2\hat{P}^{2/3} \left( 2\alpha^3 \hat{P}^{4/3} +3\alpha^2\hat{P}^{2/3} -2\alpha\hat{P} -2\alpha\hat{Q} + 2\hat{P}^{1/3}\hat{Q} \right) = 0, 
\end{align}
respectively. Because $\hat{P}>0$, $\alpha >0$, 
\begin{align}
\hat{Q} &= \frac{\alpha^4 \hat{P}^{4/3} + 2\alpha^3 \hat{P}^{2/3} -2\alpha^2 \hat{P}  + \hat{P}^{2/3}}{2\alpha \left( \alpha - 2 \hat{P}^{1/3} \right)}, \\ 
\hat{Q} &= \frac{2\alpha^3 \hat{P}^{4/3} +3\alpha^2\hat{P}^{2/3} -2\alpha\hat{P}}{2 \left( \alpha - \hat{P}^{1/3} \right)} . 
\end{align}
In this case, we assumed that $\alpha > 2\hat{P}^{1/3}$, which is true for small $\hat{P}$ if $\alpha \rightarrow \alpha_0 >0$ as $\hat{P} \rightarrow 0$. 
Therefore, 
\begin{equation}
\alpha = 
\frac{\alpha^4 \hat{P}^{4/3} + 2\alpha^3 \hat{P}^{2/3} -2\alpha^2 \hat{P}  + \hat{P}^{2/3}}{2\alpha^3 \hat{P}^{4/3} +3\alpha^2\hat{P}^{2/3} -2\alpha\hat{P}} \cdot \frac{\alpha - \hat{P}^{1/3}}{\alpha - 2\hat{P}^{1/3}} . 
\end{equation}
If we assume that $\alpha \rightarrow \alpha_0$ as $\hat{P} \rightarrow 0$, then 
\begin{equation}
\alpha_0 = \frac{2\alpha_0^3 + 1}{3\alpha_0^2} . 
\end{equation}
Hence $\alpha_0 = 1$, and \eqref{eq:asmptPzero} can be obtained. 

The above arguments are based on the assumption that there exists a limit of $\alpha (\hat{P} )$. 
The implicit function theorem can be used to justify these arguments. 
\end{proof}

\begin{proof}[Proof of Lemma~{\ref{lem:asymptPinfty}}]
Substituting $z=\alpha \hat{P}$ into \eqref{Fzero} and \eqref{Fzzero} yields  
\begin{align}
&\hat{P}^{2} \left( \alpha^4 \hat{P}^{2} + 2\alpha^3 \hat{P} -2\alpha^2 \hat{P}  -2\alpha^2 \hat{Q} +4\alpha \hat{Q} + 1 \right) = 0, \\ 
&2\alpha^3 \hat{P}^3 +3\alpha^2\hat{P}^2 -2\alpha\hat{P}^2 -2( \alpha -1) \hat{P}\hat{Q}  = 0, 
\end{align}
respectively. Because $\hat{P}>0$, $\alpha >0$, 
\begin{align}
&\alpha^4 \hat{P}^{2} + 2\alpha^3 \hat{P} -2\alpha^2 \hat{P}  + 1 = 2\alpha (\alpha -2) \hat{Q}, \\ 
&2\alpha^3 \hat{P}^2 + \alpha (3\alpha -2) \hat{P}  = 2( \alpha -1) \hat{Q}. 
\end{align}
If we assume that $\alpha > 2$, 
\begin{align}
\hat{Q} &= \frac{\alpha^2 \left\{ \alpha^2 \hat{P}^2 + 2 (\alpha -1) \hat{P} \right\} +1}{2\alpha \left( \alpha - 2 \right)}, \\ 
\hat{Q} &= \frac{\alpha \left\{ 2\alpha^2 \hat{P}^2 + (3\alpha -2) \hat{P}\right\}}{2 \left( \alpha -1 \right)} . 
\end{align}
Therefore, 
\begin{equation}
\frac{\alpha (\alpha -2)}{\alpha -1} = 
\frac{\alpha^2 \left\{ \alpha^2 \hat{P}^2 + 2\left( \alpha - 1 \right) \hat{P} \right\} +1}{\alpha \left\{ 2\alpha^2\hat{P}^2 + \left( 3\alpha -2 \right) \hat{P}\right\} } . 
\end{equation}
If we assume that $\alpha \rightarrow \alpha_\infty$ as $\hat{P} \rightarrow \infty$, then 
\begin{equation}
\frac{\alpha_\infty ( \alpha_\infty -2)}{\alpha_\infty -1} = \frac{\alpha_\infty}{2}. 
\end{equation}
Since we have assumed that $\alpha >2$, necessarily we have $\alpha_\infty \geq 2$. In this case, $\alpha_\infty = 3$, and \eqref{eq:asmptPinfty} can be obtained. 

Similar to Lemma~{\ref{lem:asymptPzero}}, the implicit function theorem can be used to justify these arguments. 
\end{proof}

\section{Stability analysis}
\label{sec:stability}
We present the proofs pertaining to the stability and instability of the equilibrium points. 
The stability of an equilibrium point $(S, H, A, C)$ in \eqref{eq:SHAC} indicates that the real parts of all the roots of the following characteristic equation are negative: 
\begin{equation}
\det 
\begin{pmatrix}
  -d_S-\lambda & p_S & 0 & -n_S \\
  0 & -d_H-\lambda & -n_H & 0 \\
  -n_A & 0 & -d_A-\lambda & 0 \\
  0 & 0 & p_C & -d_C-\lambda 
\end{pmatrix}
= 0 ,  
\end{equation}
where 
\begin{alignat}{3}
n_S &= \frac{b_S c_S H}{\left( a_S + b_S C \right)^2}, &\quad
n_H &= \frac{b_H c_H}{\left( a_H + b_H A \right)^2}, \quad 
n_A = \frac{b_A c_A}{\left( a_A + b_A S \right)^2}, \\
p_S &= \frac{c_S}{a_S + b_S C}, &\quad 
p_C &= c_C X
\end{alignat}
are all positive constants (except $p_C =0$ when $X=0$). 
If we expand the characteristic equation as 
\begin{equation}
\label{eq:ce}
\lambda^4 + a_3 \lambda^3 + a_2 \lambda^2 + a_1 \lambda + a_0 = 0 , 
\end{equation}
the coefficients are as follows: 
\begin{equation}
\label{eq:ai}
\begin{aligned}
a_0 &= d_S d_H d_A d_C - \left( p_S n_A n_H d_C + p_C n_A n_S d_H \right), \\
a_1 &= d_S d_H \left( d_A + d_C \right) + d_A d_C \left( d_S + d_H \right) - \left( p_S n_A n_H + p_C n_A n_S \right), \\
a_2 &= d_S d_H + d_A d_C + \left( d_S + d_H \right) \left( d_A + d_C \right) ,  \\
a_3 &= d_S + d_H + d_A + d_C . 
\end{aligned}
\end{equation}
Then, according to the Routh--Hurwitz criterion, the necessary and sufficient condition for the real parts of $\lambda_1$, $\lambda_2$, $\lambda_3$, and $\lambda_4$ to be negative is for the following six inequalities to be satisfied: 
$a_0>0$, $a_1>0$, $a_2>0$, $a_3 > 0$, 
\begin{equation*}
\det 
\begin{pmatrix}
 a_3 & a_1 \\
 1 & a_2 
\end{pmatrix}
= a_2 a_3 - a_1 >0, \quad 
\det 
\begin{pmatrix}
 a_3 & a_1 & 0 \\
 1 & a_2 & a_0 \\
 0 & a_3 & a_1
\end{pmatrix}
= a_1 a_2 a_3 - a_0 a_3^2 - a_1^2 > 0 . 
\end{equation*}
Among these relations, $a_2 > 0$, $a_3 > 0$ and $a_2 a_3 - a_1 >0$ always hold. Therefore, the stability is determined by the remaining three conditions. 
Furthermore, the following lemma holds. 
\begin{Lem}
\label{lem:B1}
If $a_0 \geq 0$, then $a_1 a_2 a_3 - a_0 a_3^2 - a_1^2 > 0$. 
\end{Lem}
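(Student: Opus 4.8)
My plan is to treat the two feedback gains $\beta := p_S n_A n_H > 0$ and $\gamma := p_C n_A n_S \ge 0$ as the only variables, since $a_2$ and $a_3$ in \eqref{eq:ai} do not contain them. From \eqref{eq:ai} one reads off $a_1 = d_S d_H(d_A + d_C) + d_A d_C(d_S + d_H) - (\beta + \gamma)$ and $a_0 = d_S d_H d_A d_C - (\beta d_C + \gamma d_H)$, so both are affine in $(\beta,\gamma)$. The key observation is that the Hurwitz determinant $\Delta := a_1 a_2 a_3 - a_0 a_3^2 - a_1^2$, viewed as a function of $(\beta,\gamma)$, is an affine function minus $a_1^2$; since $a_1$ is affine, the only second-order contribution is $-(\beta+\gamma)^2$. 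Hence $\Delta$ is \emph{concave} in $(\beta,\gamma)$, with a constant negative-semidefinite Hessian that does not depend on the decay rates.

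Next I would bring in the hypothesis. The condition $a_0 \ge 0$ is precisely the linear inequality $\beta d_C + \gamma d_H \le d_S d_H d_A d_C$, which together with $\beta,\gamma \ge 0$ confines $(\beta,\gamma)$ to the triangle $R$ with vertices $(0,0)$, $(d_S d_H d_A, 0)$ and $(0, d_S d_A d_C)$. A concave function attains its minimum over a polytope at a vertex: for $x = \sum_i \lambda_i v_i$ with $\lambda_i \ge 0$ and $\sum_i \lambda_i = 1$, concavity gives $\Delta(x) \ge \sum_i \lambda_i \Delta(v_i) \ge \min_i \Delta(v_i)$. It therefore suffices to verify $\Delta > 0$ at the three vertices of $R$.

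The vertex evaluations are short. At $(0,0)$ the coefficients $a_0,\dots,a_3$ reduce to the elementary symmetric functions of $d_S, d_H, d_A, d_C$, i.e.\ the characteristic polynomial is $\prod(\lambda + d_\bullet)$ with all roots negative; the system is then Hurwitz stable and every Routh--Hurwitz determinant, in particular $\Delta$, is strictly positive. At each of the other two vertices $a_0 = 0$, so $\Delta = a_1(a_2 a_3 - a_1)$; the factor $a_2 a_3 - a_1 > 0$ is already known, and since at $\beta=\gamma=0$ the value of $a_1$ is the sum of the four products of triples of $d_S,d_H,d_A,d_C$, subtracting $d_S d_H d_A$ (resp.\ $d_S d_A d_C$) leaves a sum of the three remaining positive triple products, so $a_1 > 0$ and $\Delta > 0$. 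By concavity, $\Delta > 0$ throughout $R$, which is exactly the claim.

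The only real obstacle is recognizing the concavity: once $\Delta$ is written as affine-minus-$(\beta+\gamma)^2$, the whole problem collapses to three elementary boundary checks. A head-on expansion of $\Delta$ as a polynomial in $d_S,\dots,d_C,\beta,\gamma$ instead produces many cross terms of individually indefinite sign, so the payoff of the concavity/vertex reduction is precisely that it sidesteps that mess. The only minor points to be careful about are that the two active vertices are exactly the points where $a_0$ vanishes and that $a_1$ stays positive there.
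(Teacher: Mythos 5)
Your proof is correct, but it takes a genuinely different route from the paper's. The paper proves the lemma by brute force: writing $\tilde{p}_S = p_S n_A n_H$ and $\tilde{p}_C = p_C n_A n_S$, it expands $a_1 a_2 a_3 - a_0 a_3^2 - a_1^2$ in full as a polynomial in $d_S, d_H, d_A, d_C, \tilde{p}_S, \tilde{p}_C$, and then substitutes the bounds $\tilde{p}_S \leq d_S d_H d_A$ and $\tilde{p}_C \leq d_S d_A d_C$ (the same consequences of $a_0 \geq 0$ that cut out your triangle, which sits inside the box defined by these bounds) into the negative terms, leaving a manifestly positive sum. You replace that page-long expansion by the structural observation that $a_2, a_3$ do not involve $(\beta,\gamma)$ while $a_0, a_1$ are affine in them, so $\Delta$ is affine minus $a_1^2$, hence concave, and a concave function on the constraint triangle is minimized at a vertex; the three vertex checks are then short. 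At $(0,0)$ the characteristic polynomial is $(\lambda+d_S)(\lambda+d_H)(\lambda+d_A)(\lambda+d_C)$, and the converse half of the Routh--Hurwitz criterion (stability implies positivity of all the Hurwitz minors) gives $\Delta>0$; at the two vertices where $a_0=0$ you have $\Delta = a_1\left( a_2 a_3 - a_1 \right)$, with $a_2 a_3 - a_1 > 0$ being the inequality the paper records as always holding, and $a_1$ there a sum of three positive triple products. What your route buys is brevity, hand-checkability, and an explanation of the role of the hypothesis: $a_0 \geq 0$ is exactly what confines the feedback gains to a polytope whose corners are easy cases. What the paper's route buys is that it stays at the level of raw algebraic estimation and never invokes the converse direction of Routh--Hurwitz --- a minor advantage, since the paper uses the full criterion elsewhere in \ref{sec:stability} anyway. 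One small point you share with the paper rather than improve on: both treatments take $a_2 a_3 - a_1 > 0$ as known; it follows from $a_2 a_3 - a_1 = e_1 e_2 - e_3 + (\beta+\gamma)$ with $e_1 e_2 > e_3$ for positive rates, and stating this would make either proof self-contained.
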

\noindent 
Overall, only two conditions remain for stability: $a_0 >0$, $a_1 >0$. 
On this basis, we can state that 
\begin{enumerate}
\item Any equilibrium point is stable when the equilibrium curve $X=X(S)$ is anhysteretic (Fig.~{\ref{fig:1}}B).
\item When the curve of the equilibrium point is hysteretic, the equilibrium point between the two folding points (point at $X'(S)=0$) is unstable, and the outer equilibrium point is stable (Fig.~{\ref{fig:1}}C).
\end{enumerate}
To prove this aspect, it is sufficient to demonstrate the following four lemmas.

\begin{Lem}
\label{lem:B2}
$\operatorname{sgn} \lambda_1 \lambda_2 \lambda_3 \lambda_4 = - \operatorname{sgn} \varphi (\sigma )$. 
\end{Lem}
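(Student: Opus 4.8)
The plan is to route the sign of the eigenvalue product through the constant term of the characteristic polynomial and then tie that constant term to the slope of the equilibrium branch, which is exactly what $\varphi$ controls. Since the characteristic polynomial is the monic quartic $\lambda^4 + a_3\lambda^3 + a_2\lambda^2 + a_1\lambda + a_0 = (\lambda-\lambda_1)(\lambda-\lambda_2)(\lambda-\lambda_3)(\lambda-\lambda_4)$, Vieta's relation gives $\lambda_1\lambda_2\lambda_3\lambda_4 = a_0$, and evaluating the determinant in the characteristic equation at $\lambda=0$ identifies $a_0$ with the determinant of the Jacobian $J$ of the right-hand side of \eqref{eq:SHAC} at the equilibrium (consistent with the explicit $a_0$ in \eqref{eq:ai}). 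It therefore suffices to show $\operatorname{sgn}(\det J) = -\operatorname{sgn}\varphi(\sigma)$.

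The key step is to differentiate the equilibrium relation along the solution branch. Writing $F(S,H,A,C;X)$ for the right-hand side of \eqref{eq:SHAC} and parametrizing the equilibrium set by $S$ (equivalently by $\sigma=a_A+b_AS$) as in \eqref{eq:Xsigma2}, the identity $F(S,H(S),A(S),C(S);X(S))\equiv 0$ differentiates to
\begin{equation*}
J\,u'(S) = -X'(S)\,g, \qquad u=(S,H,A,C), \quad g=\frac{\partial F}{\partial X}=(0,0,0,c_C A)^{\top}.
\end{equation*}
Because the first component of $u'(S)$ is identically $1$ (as $u_1=S$), Cramer's rule applied to this linear system yields $\det J = \det J_1$, where $J_1$ is $J$ with its first column replaced by $-X'(S)\,g$. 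Expanding $\det J_1$ along that column leaves a single nonzero entry and one $3\times 3$ minor, and I expect it to collapse to
\begin{equation*}
\det J = -\,X'(S)\,c_C A\,n_S\,d_H\,d_A,
\end{equation*}
in which every factor multiplying $X'(S)$ is strictly positive (note $c_C A>0$ even where $X=0$). Hence $\operatorname{sgn}(\det J) = -\operatorname{sgn}X'(S)$.

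To finish I would invoke the chain rule $X'(S)=b_A\,dX/d\sigma$ with $b_A>0$ together with \eqref{eq:dXdsigma}, namely $dX/d\sigma = \mu\,\varphi(\sigma)/[(\sigma-p)^2(\sigma+q)^2]$ with $\mu>0$, so that $\operatorname{sgn}X'(S)=\operatorname{sgn}\varphi(\sigma)$. Chaining the two sign identities gives $\operatorname{sgn}\lambda_1\lambda_2\lambda_3\lambda_4 = \operatorname{sgn}(\det J) = -\operatorname{sgn}\varphi(\sigma)$, and at a fold point both $X'$ and $\varphi$ vanish, so the equality persists there as well.

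The main obstacle is pinning down the clean identity $\det J = -X'(S)\,c_C A\,n_S d_H d_A$ with the correct overall sign, and justifying the Cramer step at the fold points where $\det J=0$: there the naive division is invalid, so I would extend the identity to those points by continuity of $\det J$ and $\det J_1$ in $\sigma$. Alternatively, one can avoid the singularity issue altogether by substituting the equilibrium expressions $p_S=d_S S/H$, $n_S=b_S d_S^2 S^2/(c_S H)$, $n_A=b_A d_A^2 A^2/c_A$, $n_H=b_H d_H^2 H^2/c_H$, $p_C=d_C C/A$ directly into $a_0$ of \eqref{eq:ai} and factoring out $\varphi(\sigma)$ using \eqref{eq:varphi4}; this proves the same statement at the cost of a longer but entirely routine computation.
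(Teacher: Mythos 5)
Your proof is correct, but it takes a genuinely different route from the paper's. The paper proves this lemma by direct computation: it substitutes the equilibrium relations \eqref{eq:sSHAC} into the coefficient $a_0$ of \eqref{eq:ai}, eliminates $H$, $A$, $C$, $X$, and rearranges in terms of $\sigma$, $\alpha_A$, $\alpha_{SH}$ to obtain the explicit factorization
\begin{equation*}
a_0 = -\frac{d_S d_H d_A d_C}{a_A \sigma \left( \sigma + a_A \alpha_A \right) \alpha_{SH}} \, \varphi (\sigma),
\end{equation*}
from which the claim is immediate because the prefactor is negative; this is exactly the ``longer but entirely routine computation'' you list as a fallback. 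Your main argument instead obtains the sign structurally: implicit differentiation of the equilibrium identity along the branch, Cramer's rule with $u_1'(S)=1$, and collapse of $\det J_1$ to a single cofactor. That cofactor computation does come out as you expect: with $g=(0,0,0,c_C A)^{\top}$ the replaced first column has its only nonzero entry in row $4$, the corresponding $3\times 3$ minor equals $-n_S d_H d_A$, and indeed $\det J = -X'(S)\, c_C A\, n_S d_H d_A$ with $c_C A,\, n_S,\, d_H,\, d_A>0$ at any equilibrium, so the sign identity $\operatorname{sgn}(\det J)=-\operatorname{sgn}X'(S)$ is right, and \eqref{eq:dXdsigma} finishes the proof. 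What your route buys is conceptual clarity --- it is the standard turning-point argument (the Jacobian determinant is proportional to the slope of the branch, so zero eigenvalues occur exactly at folds) and it reuses \eqref{eq:dXdsigma} rather than redoing algebra; what the paper's route buys is the explicit prefactor \eqref{eq:a0s}, which it then recycles in the proofs of Lemmas~\ref{lem:B4} and~\ref{lem:B5}. One cleanup: your worry about fold points dissolves if you state Cramer's rule in adjugate form --- multiplying $J u' = -X' g$ by $\operatorname{adj} J$ gives $\det (J)\, u_1'(S) = \det J_1$ as a pointwise identity valid for singular $J$ too; alternatively, at a fold $Ju'=0$ with $u'\neq 0$ forces $\det J=0$ directly. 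Your continuity patch also works, but it silently needs the zeros of $\det J$ along the branch to be isolated (true, since $\det J$ is a rational function of $\sigma$ that is positive at $X=0$ by the estimate in the proof of Lemma~\ref{lem:B5}), so the adjugate phrasing is the cleaner fix.
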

\noindent 
Here, $\varphi (\sigma )$ $(\sigma = a_A + b_A S)$ is the function defined by \eqref{eq:varphi} representing the increase/decrease in the curve $X=X(S)$ of the equilibrium points, and $\operatorname{sgn} X'(S) = \operatorname{sgn} \varphi(\sigma )$. 
In other words, according to Lemma~{\ref{lem:B2}}, the zero eigenvalue appears at the turning point of the curve. 
Moreover, the sign of $a_0 = \lambda_1 \lambda_2 \lambda_3 \lambda_4$ changes through the turning point. Specifically, if the system is stable from a certain point to the turning point, it becomes unstable at the turning point.

\begin{Lem}
\label{lem:B3}
If the real parts of all the eigenvalues $\lambda_1$, $\lambda_2$, $\lambda_3$, and $\lambda_4$ are non-positive, none of them are pure imaginary numbers. 
\end{Lem}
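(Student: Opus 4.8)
The plan is to argue by contradiction and to connect the existence of a purely imaginary eigenvalue to the vanishing of the third Routh--Hurwitz determinant $a_1 a_2 a_3 - a_0 a_3^2 - a_1^2$, which Lemma~\ref{lem:B1} forbids once $a_0 \geq 0$. Suppose some root of \eqref{eq:ce} were purely imaginary, say $\lambda = i\omega$ with $\omega$ a nonzero real number (the value $\lambda = 0$ is real and is the turning-point case governed by Lemma~\ref{lem:B2}). Since the coefficients in \eqref{eq:ai} are real, $-i\omega$ is also a root, so $\lambda^2 + \omega^2$ divides the quartic and we may write
\begin{equation*}
\lambda^4 + a_3 \lambda^3 + a_2 \lambda^2 + a_1 \lambda + a_0 = \left( \lambda^2 + \omega^2 \right) \left( \lambda^2 + b\lambda + c \right)
\end{equation*}
for some real $b$ and $c$. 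Matching coefficients gives $b = a_3$, $c + \omega^2 = a_2$, $b\omega^2 = a_1$ and $c\omega^2 = a_0$. The first and third relations force $\omega^2 = a_1 / a_3$ (recall $a_3 > 0$ always, and $\omega \neq 0$ makes $a_1 = a_3 \omega^2 > 0$), and eliminating $c$ from the remaining two relations yields precisely $a_1 a_2 a_3 - a_0 a_3^2 - a_1^2 = 0$.

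Next I would show that the standing hypothesis forces $a_0 \geq 0$. By Vieta's formula for \eqref{eq:ce}, $a_0 = \lambda_1 \lambda_2 \lambda_3 \lambda_4$. Because the coefficients are real, the non-real roots occur in conjugate pairs, so their number is even and the degree-four polynomial has an even number (zero, two, or four) of real roots. Each conjugate pair $\lambda, \bar{\lambda}$ contributes a factor $|\lambda|^2 \geq 0$ to the product, while each real root is $\leq 0$ by hypothesis; since the real roots occur in even number, the product of all of them is $\geq 0$ as well. Hence $a_0 = \lambda_1 \lambda_2 \lambda_3 \lambda_4 \geq 0$.

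Finally, applying Lemma~\ref{lem:B1} with $a_0 \geq 0$ gives $a_1 a_2 a_3 - a_0 a_3^2 - a_1^2 > 0$, which contradicts the identity $a_1 a_2 a_3 - a_0 a_3^2 - a_1^2 = 0$ obtained above. Therefore no purely imaginary eigenvalue can exist. The only delicate point in this argument is the sign bookkeeping that produces $a_0 \geq 0$: one must observe that the hypothesis controls not the sum but the product of the roots, and that the even count of real roots (forced by real coefficients) is exactly what turns the individually non-positive real factors into a non-negative product. Once this is noted, Lemma~\ref{lem:B1} closes the argument immediately.
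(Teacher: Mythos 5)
Your proof is correct and takes essentially the same approach as the paper's: both reduce the existence of a pure imaginary root $\lambda = i\omega$ ($\omega \neq 0$) to the identity $a_1 a_2 a_3 - a_0 a_3^2 - a_1^2 = 0$ (the paper by separating real and imaginary parts of the characteristic equation, you by factoring out $\lambda^2 + \omega^2$ and matching coefficients), then combine $a_0 \geq 0$ with Lemma~\ref{lem:B1} to reach a contradiction. The only difference is in detail, and it favors you: where the paper justifies $a_0 \geq 0$ by a terse appeal to the Routh--Hurwitz conditions holding with weak inequalities, you give a self-contained Vieta argument (conjugate pairs contribute $|\lambda|^2 \geq 0$, and the real roots, being non-positive and even in number, have non-negative product).
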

This Lemma indicates that when the stability changes along the curve of the equilibrium points, it must pass through the zero eigenvalue and not the pure imaginary number. 
Moreover, according to Lemma~{\ref{lem:B2}}, the point at which the stability changes is the turning point of the curve of the equilibrium points.

\begin{Lem}
\label{lem:B4}
If $X$ is sufficiently large, any equilibrium point is stable. 
\end{Lem}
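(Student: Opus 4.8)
The plan is to reduce the claim, exactly as in the discussion preceding the lemma, to the two surviving Routh--Hurwitz inequalities $a_0>0$ and $a_1>0$: the conditions $a_2>0$, $a_3>0$ and $a_2a_3-a_1>0$ hold automatically, and $a_1a_2a_3-a_0a_3^2-a_1^2>0$ follows from Lemma~\ref{lem:B1} once $a_0\ge 0$. It then suffices to verify that both $a_0$ and $a_1$ in \eqref{eq:ai} are strictly positive once $X$ is large, and I would do this by computing their limits as $X\to\infty$.

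First I would pin down the asymptotics of the equilibrium along the branch carrying large $X$. From the parametrization \eqref{eq:Xsigma2}, $X$ is bounded on compact subsets of $(p,b)$ and blows up only as $\sigma\to p^{+}$, because $(\sigma-p)$ is the sole denominator factor that vanishes there while the numerator stays bounded away from $0$; hence \emph{every} equilibrium with $X$ large has $\sigma$ close to $p=a_A$, i.e.\ $S\to 0^{+}$. Feeding this back through the equilibrium relations \eqref{eq:sSHAC}: $A=c_A/\{d_A(a_A+b_AS)\}\to c_A/(d_Aa_A)$ and $H=c_H/\{d_H(a_H+b_HA)\}$ tend to finite positive limits, whereas $C=c_CXA/d_C\to\infty$, growing linearly in $X$.

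Next I would track the building blocks of \eqref{eq:ai}. Since $A$ and $H$ stay bounded and bounded away from $0$, the quantities $n_A=b_Ac_A/(a_A+b_AS)^2$ and $n_H=b_Hc_H/(a_H+b_HA)^2$ converge to finite positive values; since $C\to\infty$, both $p_S=c_S/(a_S+b_SC)$ and $n_S=b_Sc_SH/(a_S+b_SC)^2$ tend to $0$; and $p_C=c_CX\to\infty$. Substituting these into $a_0$ and $a_1$, every term is harmless except the products $p_Cn_An_S$ (appearing in both coefficients, times $d_H$ in $a_0$), which are of the indeterminate form $\infty\cdot 0$. This is the one point needing real care. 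The resolution is a rate estimate: since $a_S+b_SC\sim b_Sc_CAX/d_C$ as $X\to\infty$, one gets $n_S\sim c_SH\,d_C^2/(b_Sc_C^2A^2X^2)$ and hence $p_Cn_S\sim c_SH\,d_C^2/(b_Sc_CA^2X)\to 0$, so the dangerous terms in fact vanish.

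With these limits in hand the conclusion is immediate: $a_0\to d_Sd_Hd_Ad_C>0$ and $a_1\to d_Sd_H(d_A+d_C)+d_Ad_C(d_S+d_H)>0$, so both are strictly positive for all sufficiently large $X$, which yields stability. The main obstacle is precisely the single $\infty\cdot 0$ term $p_Cn_An_S$; everything else is a routine limit, and the whole difficulty is absorbed once one establishes that $C$ (and therefore $a_S+b_SC$) grows exactly like $X$, forcing $p_Cn_S=O(1/X)$.
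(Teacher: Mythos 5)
Your proof is correct, and its skeleton matches the paper's: reduce to the two surviving Routh--Hurwitz inequalities $a_0>0$ and $a_1>0$, note that equilibria with large $X$ must have $\sigma=a_A+b_AS$ close to $a_A$ (the paper merely asserts $S\sim 0$; you justify it from the parametrization \eqref{eq:Xsigma2}, which is a welcome addition), and show both coefficients tend to positive limits there. Where you genuinely diverge is in the mechanics of that limit. The paper first eliminates $H$, $A$, $C$, $X$ exactly via \eqref{eq:sSHAC} and works with closed-form expressions in $\sigma$: for $a_0$ it invokes the identity \eqref{eq:a0s}, $a_0=-\mathrm{const}\cdot\varphi(\sigma)$ with positive constant, so positivity follows at once from $\varphi(\sigma)<0$ near $\sigma=a_A$, i.e., from the equilibrium curve $X=X(S)$ being decreasing at its large-$X$ end; for $a_1$ it uses the closed form \eqref{eq:a1s}, in which the potentially dangerous contributions carry explicit factors of $\left(\sigma-a_A\right)$ and vanish in the limit, so no indeterminate product ever appears. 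You instead keep the raw building blocks $p_S$, $p_C$, $n_S$, $n_H$, $n_A$ of \eqref{eq:ai} and resolve the single $\infty\cdot 0$ product $p_Cn_An_S$ by the rate estimate $C\sim c_CAX/d_C$, hence $p_Cn_S=O(1/X)$; this is exactly the right estimate, and the rest of your limits are routine and correct. The trade-off: the paper's route ties the sign of $a_0$ to the slope of the equilibrium curve, a structural fact it states as Lemma~\ref{lem:B2} and reuses verbatim in Lemma~\ref{lem:B5}, whereas your argument is more elementary and self-contained --- no symbolic elimination needed --- at the price of carrying out the asymptotics by hand.
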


\begin{Lem}
\label{lem:B5}
If $X=0$, any equilibrium point is stable. 
\end{Lem}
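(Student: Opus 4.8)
The plan is to use the reduction already established just above the lemma: when $X=0$, stability of an equilibrium is equivalent to the two inequalities $a_0>0$ and $a_1>0$, since $a_2>0$, $a_3>0$, and $a_2a_3-a_1>0$ hold unconditionally, while Lemma~\ref{lem:B1} disposes of the last Routh--Hurwitz determinant once $a_0\ge 0$. The simplification at $X=0$ comes from $p_C=c_C X=0$: both the $p_C n_A n_S d_H$ term in $a_0$ and the $p_C n_A n_S$ term in $a_1$ vanish. (At equilibrium the fourth equation of \eqref{eq:sSHAC} also forces $C=0$, but this fact is not needed beyond internal consistency.) Thus I would only need to show
\[
a_0 = d_C\bigl(d_S d_H d_A - p_S n_A n_H\bigr) > 0, \qquad a_1 = d_S d_H d_A + d_S d_H d_C + d_A d_C d_S + d_A d_C d_H - p_S n_A n_H > 0,
\]
and both of these reduce to the single bound $p_S n_A n_H < d_S d_H d_A$.

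The key step is to evaluate $p_S n_A n_H$ at the equilibrium by feeding in the steady-state relations \eqref{eq:sSHAC}. The second equation gives $d_H H\,(a_H+b_H A)=c_H$, so that $n_H=b_H c_H/(a_H+b_H A)^2=b_H d_H H/(a_H+b_H A)$; similarly the third yields $n_A=b_A d_A A/(a_A+b_A S)$. Multiplying by $p_S$ and using the first equation $p_S H=d_S S$, the factor $H$ cancels and
\[
p_S n_A n_H = d_S d_H d_A \cdot \frac{b_A S}{a_A+b_A S}\cdot\frac{b_H A}{a_H+b_H A}.
\]
Since $a_A>0$ and $a_H>0$, each fraction lies strictly in $(0,1)$, so their product is strictly less than $1$ and $p_S n_A n_H < d_S d_H d_A$ at every equilibrium; in fact this bound holds for all $X$, not only for $X=0$.

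With this bound in hand, both inequalities follow at once: $a_0=d_C\bigl(d_S d_H d_A - p_S n_A n_H\bigr)>0$ because $d_C>0$, and in $a_1$ the subtracted term $p_S n_A n_H$ is dominated by the single summand $d_S d_H d_A$, leaving $a_1 > d_S d_H d_C + d_A d_C d_S + d_A d_C d_H>0$. Invoking the reduction to $a_0>0$, $a_1>0$ then completes the proof. I do not expect a genuine obstacle here; the only real content is recognizing that the product of the Hill-term slopes $n_A,n_H$ telescopes against the equilibrium identities so as to expose the two bounded fractions, after which positivity of $a_0$ and $a_1$ is automatic.
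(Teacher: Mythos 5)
Your proposal is correct, and it reaches the goal by a genuinely different route than the paper. Both arguments share the same reduction (via Lemma~\ref{lem:B1} and the unconditional inequalities, stability at $X=0$ amounts to $a_0>0$ and $a_1>0$, with $p_C=0$ killing the $p_C n_A n_S$ terms), but the key bound is obtained differently. The paper evaluates everything at the explicit parameter value $\sigma=b$ of the $X=0$ equilibrium, rewrites $b$ in terms of $\alpha_A$, $\alpha_{SH}$ and the quantity $b^{\ast}$ of \eqref{eq:bstar}, and uses the two inequalities \eqref{eq:bstarineq} to get the quantitative estimate $p_S n_H n_A/(d_A d_S d_H)\leq \frac{\alpha_{SH}}{1+\alpha_{SH}}\frac{\alpha_A}{1+\alpha_A}<1$; it also handles $a_0$ separately through the sign of $\varphi(b)$ via \eqref{eq:a0s} and \eqref{eq:varphipb}. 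You instead derive $p_S n_A n_H < d_S d_H d_A$ directly from the steady-state relations: the identities $n_H = b_H d_H H/(a_H+b_H A)$, $n_A = b_A d_A A/(a_A+b_A S)$ and $p_S H = d_S S$ telescope so that
\begin{equation*}
p_S n_A n_H = d_S d_H d_A \cdot \frac{b_A S}{a_A+b_A S}\cdot \frac{b_H A}{a_H+b_H A} < d_S d_H d_A ,
\end{equation*}
and this single inequality gives both $a_0>0$ and $a_1>0$ at once. Your argument is more elementary (no parametrization $\sigma=b$, no $\alpha$-variables needed) and more general, since the bound holds at every equilibrium for every $X\geq 0$, not just $X=0$; what the paper's computation buys in exchange is the sharper explicit bound in terms of $\alpha_A$ and $\alpha_{SH}$, which fits the interpretive framework of Corollaries~\ref{cor:1} and \ref{cor:2}. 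One small point worth making explicit in your write-up: the strictness of the two fractions being less than $1$ uses $S>0$ and $A>0$ at equilibrium, which follows from the positivity of $H$ and the steady-state equations.
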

\noindent 
Thus, the equilibrium point is always stable if no folding points occur. 
When there are two turning points $(X_1 , S_1 )$ and $(X_2 , S_2 )$ with $X_1 < X_2$, all equilibrium points satisfying $S_1 < S < S_2$ are unstable, and the other equilibrium points are stable.

\begin{proof}[Proof of Lemma~{\ref{lem:B1}}]
By substituting $\tilde{p}_S = p_S n_A n_H$ and $\tilde{p}_C = p_C n_A n_S$, we have 
\begin{align*}
 &\qquad a_1 a_2 a_3 - a_0 a_3^2 - a_1^2 \\
 &= \left\{ \left( d_C + d_A \right) d_H^2 + \left( d_C^2 + 2d_A d_C + d_A^2 \right) d_H + d_A d_C^2 + d_A^2 d_C \right\} d_S^3 \\
 &\quad + \bigl\{ -\left( d_H + d_A \right) \tilde{p}_S -\left( d_C + d_A \right) \tilde{p}_C + \left( d_C + d_A \right) d_H^3 + \left( 2d_C^2 + 4 d_A d_C + 2 d_A^2 \right) d_H^2 \\
 &\quad + \left( d_C^3 + 4 d_A d_C^2 + 4 d_A^2 d_C + d_A^3 \right) d_H + d_A d_C^3 + 2 d_A^2 d_C^2 + d_A^3 d_C \bigr\} d_S^2 \\
 &\quad + \Bigl[ \left\{ -d_H^2 + \left( d_C - d_A \right) d_H + d_C^2 + d_A d_C - d_A^2 \right\} \tilde{p}_S \\
 &\quad + \left\{ d_H^2 + \left( d_A + d_C \right) d_H - d_A^2 - d_A d_C - d_C^2 \right\} \tilde{p}_C \\
 &\quad + \left( d_C^2 + 2 d_A d_C + d_A^2 \right) d_H^3 + \left( d_C ^3 + 4 d_A d_C^2 + 4d_A^2 d_C + d_A^3 \right) d_H^2 \\
 &\quad + \left( 2 d_A^3 d_C + 4 d_A^2 d_C^2 +2d_A d_C^3 \right) d_H + d_C^3 d_A^2 + d_C^2 d_A^3 \Bigr] d_S \\
 &\quad -\tilde{p}_S^2 + \left\{ -2\tilde{p}_C - d_A d_H^2 + \left( d_C^2 + d_A d_C - d_A^2 \right) d_H + d_C^3 + d_A d_C^2 \right\} \tilde{p}_S \\
 &\quad -\tilde{p}_C^2 + \left\{ d_H^3 + \left( d_C + d_A \right) d_H^2 + d_A d_C d_H - d_A d_C^2 -d_A^2 d_C \right\} \tilde{p}_C \\
 &\quad + \left( d_A d_C^2 + d_A^2 d_C \right) d_H^3 + \left( d_A d_C^3 + 2 d_A^2 d_C^2 + d_A^3 d_C \right) d_H^2 + \left( d_A^2 d_C^3 + d_A^3 d_C^2 \right) d_H . 
\end{align*}
As $a_0 \geq 0$, $\tilde{p}_S \leq d_S d_H d_A$, $\tilde{p}_C \leq d_S d_A d_C$. Using these expressions for the negative terms in the abovementioned equation, we can obtain the following expression: 
\begin{align*}
 &\qquad a_1 a_2 a_3 - a_0 a_3^2 - a_1^2 \\
 &\geq \left\{ d_C d_H^2 + \left( d_C^2 + 2d_A d_C \right) d_H \right\} d_S^3 \\
 &\quad + \bigl\{ d_C d_H^3 + \left( 2d_C^2 + 4 d_A d_C \right) d_H^2 + \left( d_C^3 + 4 d_A d_C^2 + 4 d_A^2 d_C \right) d_H \bigr\} d_S^2 \\
 &\quad + \Bigl[ \left\{ d_C d_H + d_C^2 + d_A d_C \right\} \tilde{p}_S + \left\{ d_H^2 + \left( d_A + d_C \right) d_H \right\} \tilde{p}_C \\
 &\quad + \left( d_C^2 + 2 d_A d_C \right) d_H^3 + \left( d_C ^3 + 4 d_A d_C^2 + 4d_A^2 d_C \right) d_H^2 \\
 &\quad + \left( 2 d_A^3 d_C + 2 d_A^2 d_C^2 +2d_A d_C^3 \right) d_H \Bigr] d_S \\
 &\quad + \left\{ \left( d_C^2 + d_A d_C \right) d_H + d_C^3 + d_A d_C^2 \right\} \tilde{p}_S + \left\{ d_H^3 + \left( d_C + d_A \right) d_H^2 + d_A d_C d_H \right\} \tilde{p}_C \\
 &\quad + \left( d_A d_C^2 + d_A^2 d_C \right) d_H^3 + \left( d_A d_C^3 + 2 d_A^2 d_C^2 + d_A^3 d_C \right) d_H^2 + \left( d_A^2 d_C^3 + d_A^3 d_C^2 \right) d_H , 
\end{align*}
the right-hand side of which is positive.  
\end{proof}

\begin{proof}[Proof of Lemma~{\ref{lem:B2}}]
By using the equations of the equilibrium point, \eqref{eq:sSHAC}, in \eqref{eq:ai} to eliminate $H$, $A$, $C$ and $X$, and by rearranging the equation by using $\sigma$ $(= a_A + b_A S)$, $\alpha_A$, $\alpha_{SH}$, we can obtain 
\begin{equation}
\label{eq:a0s}
a_0 = -\frac{d_S d_H d_A d_C}{a_A \sigma \left( \sigma + a_A \alpha_A \right) \alpha_{SH}} \varphi (\sigma) . 
\end{equation}
Therefore, from $a_0 = \lambda_1 \lambda_2 \lambda_3 \lambda_4$, the claim is verified. 
\end{proof}

\begin{proof}[Proof of Lemma~{\ref{lem:B3}}]
The necessary and sufficient condition for the characteristic equations \eqref{eq:ce} to have a pure imaginary root $\lambda = \rho i$ $(\rho \neq 0)$ is $a_1 a_2 a_3 - a_0 a_3^2 - a_1^2 = 0$ because $\rho^4 - a_2 \rho^2 + a_0 =0$ and $\rho \left( a_3 \rho^2 -a_1 \right) =0$. 
According to the assumption, because the Routh--Hurwitz conditions hold with equal signs, we have $a_0 \geq 0$. 
Therefore, according to Lemma~{\ref{lem:B1}}, $a_1 a_2 a_3 - a_0 a_3^2 - a_1^2 > 0$. 
Thus, the characteristic equation \eqref{eq:ce} does not have any pure imaginary root. 
\end{proof}

\begin{proof}[Proof of Lemma~{\ref{lem:B4}}]
As mentioned previously, only two of the Routh--Hurwitz conditions must be examined: $a_0> 0$ and $a_1 >0$. 
When $X$ is sufficiently large, $S \sim 0$ $(\sigma \sim a_A)$. 
Then, the equilibrium point satisfies $X'(S)<0$, and therefore, $\varphi (\sigma ) <0$. 
Thus, $a_0 > 0$ immediately follows from \eqref{eq:a0s}. 
Next, considering $a_1$, in the same manner as \eqref{eq:a0s}, by rearranging the expression using $\sigma$, $\alpha_A$ and $\alpha_{SH}$, we can obtain 
\begin{equation}
\label{eq:a1s}
\begin{aligned}
a_1 &= d_S d_H \left( d_A + d_C \right) + d_A d_C \left( d_S + d_H \right) 
- d_S d_H d_A \frac{a_A \alpha_A \left( \sigma - a_A \right)}{\sigma \left( \sigma + a_A \alpha_A \right)} \\
&\quad -d_A d_C d_S \frac{\sigma - a_A}{\sigma} \left\{ 1- \frac{\left( \sigma  - a_A \right) \left( \sigma + a_A \alpha_A \right)}{\alpha_{SH}a_A \sigma} \right\} . 
\end{aligned}
\end{equation}
Because $\sigma \sim a_A$, we see that $a_1 \sim d_S d_H \left( d_A + d_C \right) + d_A d_C \left( d_S + d_H \right) >0$. 
\end{proof}

\begin{proof}[Proof of Lemma~{\ref{lem:B5}}]
As in the previous case, it is sufficient to consider only two conditions: $a_0 > 0$ and $a_1 >0$. 
When $X=0$, we have $\sigma = b$. 
Then, the equilibrium point satisfies $X'(S)<0$, and therefore, $\varphi (\sigma ) <0$. 
Thus, $a_0 > 0$ immediately follows from \eqref{eq:a0s}. 
By rearranging $\sigma = b$ using $\alpha_A$ and $\alpha_{SH}$, we obtain  
\begin{equation}
\label{eq:bstar}
\begin{aligned}
b &= \eta \left( \sqrt{\hat{Q}^2 + \hat{P}} + \hat{Q} \right) \\
 &= p \left( \frac{q}{p} - 1 \right) \left\{ \frac{1}{2} \sqrt{ \left( \frac{\alpha_{SH} - \alpha_A +1}{\alpha_A - 1} \right)^2 + \frac{\alpha_A}{\left( \alpha_A - 1 \right)^2} } 
 + \frac{1}{2} \left( \frac{\alpha_{SH} - \alpha_A +1}{\alpha_A - 1} \right) \right\} \\
 &= a_A \left( \alpha_A -1 \right) \left\{ \frac{1}{2} \sqrt{ \left( \frac{\alpha_{SH} - \alpha_A +1}{\alpha_A - 1} \right)^2 + \frac{\alpha_A}{\left( \alpha_A - 1 \right)^2} } 
 + \frac{1}{2} \left( \frac{\alpha_{SH} - \alpha_A +1}{\alpha_A - 1} \right) \right\} \\
 &=: a_A \left( \alpha_A -1 \right) b^{\ast} . 
\end{aligned}
\end{equation}
Here, we can see that $b^{\ast}$ satisfies the following condition 
\begin{equation}
\label{eq:bstarineq}
b^{\ast} \geq \frac{1}{\alpha_A - 1} , \quad 
b^{\ast} \geq \frac{\alpha_{SH} - \alpha_A +1}{\alpha_A - 1}
\end{equation}
as $b > p = a_A$. 
Then, from 
\begin{equation}
p_S = \frac{c_S}{a_S}, \quad 
n_H = \frac{b_H c_H}{\left( a_H + b_H \cfrac{c_A}{d_A} \cfrac{1}{\sigma}\right)^2}, \quad 
n_A = \frac{b_A c_A}{\sigma^2}, 
\end{equation}
we have 
\begin{equation}
\begin{aligned}
\frac{p_S n_H n_A}{d_A d_S d_H} &= \frac{\alpha_{SH} \alpha_A}{\left( \cfrac{\sigma}{a_A} + \alpha_A \right)^2} \\
&= \frac{\alpha_{SH} \alpha_A}{\left\{ b^{\ast} \left( \alpha_A -1 \right)+ \alpha_A \right\}^2} . 
\end{aligned}
\end{equation}
By using the two inequalities \eqref{eq:bstarineq} at the square of the denominator on the right-hand side, one can obtain
\begin{equation}
\frac{p_S n_H n_A}{d_A d_S d_H} \leq \frac{\alpha_{SH}}{1+ \alpha_{SH}} \frac{\alpha_A}{1+ \alpha_A} < 1 . 
\end{equation}
Moreover, recall that $p_C =0$ when $X=0$. In this case, 
\begin{equation}
\begin{aligned}
a_1 &= d_S d_H  d_C + d_A d_C \left( d_S + d_H \right) + d_S d_H d_A \left( 1 - \frac{p_S n_H n_A}{d_A d_S d_H} \right) \\
& > d_S d_H  d_C + d_A d_C \left( d_S + d_H \right) , 
\end{aligned}
\end{equation}
and, $a_1 >0$. 
\end{proof}

\section{Discussion}
\label{sec:discussion}
Flexible and stable control of the cell state according to the environmental conditions is the basis for realizing a robust life system. 
In general, hysteretic control is one of the methods that exhibits a prompt and stable response to environmental changes. 
In this work, we clarified the necessary and sufficient conditions for the hysteretic control of the cell state, considering the case of the bacterial cell type regulation as an example.
By incorporating this cell type regulation model into a model of the cell population dynamics \citep{RN236,RN238,RN258,RN257,RN259,RN260,RN285}, the colony morphology can be predicted under a wide range of environmental conditions. 
In particular, it is possible to correctly reproduce the formation of concentric colonies that expand periodically, which has not been realized so far \citep{RN260}. Moreover, the effect of the environmental conditions can be compared with the experimental findings. 

In addition, we developed a model of the cell type regulation influenced by the environmental pH changes, and the findings were noted to be consistent with those of the concentric colony formation experiment. 
In the existing studies, the concentration of agar in the medium, which is a control parameter of the cell population motility, has been widely examined as an environmental factor. 
It is expected that the growth dynamics of concentric circle colonies depending on the agar concentration can be discussed in combination with the model of the cell population dynamics. 

Furthermore, the structure of the hysteresis for external signals presented herein is not limited to \textit{B. subtilis} cell type selection. 
Specifically, a similar structure occurs for phenomena that can be expressed in the form of model \eqref{eq:SHAC}. Furthermore, similar properties are expected if the regulatory network is similar to that shown in Fig.~\ref{fig:1}A. 
In addition, as indicated previously, the main result, Theorem~{\ref{thm:1}}, can be expressed in terms of only the indices $\alpha_A$ and $\alpha_{SH}$, which represent the activation rate parameters from outside the system, as in Corollaries~{\ref{cor:1}} and {\ref{cor:2}}, respectively. 
Such expressions are universal and can help elucidate the regulatory mechanisms of cell populations. 
This analysis methodology, which focuses on the presence or absence of hysteresis related to external signals, can provide a basis for the comprehensive understanding of other control systems, among other applications.

\backmatter

%
%
%

\bmhead{Acknowledgements}

This work was supported by JSPS KAKENHI, Grant Number 19K03645 (S.T.), 23K03208 (S.T.), 23K03225 (M.N.), 23K03176 (I.T.), and MEXT KAKENHI, Grant Number 17H06327 (S.T.).

\section*{Declarations}

\bmhead{Conflict of interest}

The authors have no conflict of interest to declare.


\bibliographystyle{sn-vancouver} 

\begin{thebibliography}{29}
\expandafter\ifx\csname natexlab\endcsname\relax\def\natexlab#1{#1}\fi
\providecommand{\url}[1]{\texttt{#1}}
\providecommand{\href}[2]{#2}
\providecommand{\path}[1]{#1}
\providecommand{\DOIprefix}{doi:}
\providecommand{\ArXivprefix}{arXiv:}
\providecommand{\URLprefix}{URL: }
\providecommand{\Pubmedprefix}{pmid:}
\providecommand{\doi}[1]{\href{http://dx.doi.org/#1}{\path{#1}}}
\providecommand{\Pubmed}[1]{\href{pmid:#1}{\path{#1}}}
\providecommand{\bibinfo}[2]{#2}
\ifx\xfnm\relax \def\xfnm[#1]{\unskip,\space#1}\fi
\bibitem[{Ben-Jacob et~al.(1994)Ben-Jacob, Schochet, Tenenbaum, Cohen, Czirok
  and Vicsek}]{RN236}
\bibinfo{author}{Ben-Jacob, E.}, \bibinfo{author}{Schochet, O.},
  \bibinfo{author}{Tenenbaum, A.}, \bibinfo{author}{Cohen, I.},
  \bibinfo{author}{Czirok, A.}, \bibinfo{author}{Vicsek, T.},
  \bibinfo{year}{1994}.
\newblock \bibinfo{title}{Generic modelling of cooperative growth patterns in
  bacterial colonies}.
\newblock \bibinfo{journal}{Nature} \bibinfo{volume}{368},
  \bibinfo{pages}{46--9}.
\newblock \DOIprefix\doi{10.1038/368046a0}.
\bibitem[{Branda et~al.(2005)Branda, Vik, Friedman and Kolter}]{RN378}
\bibinfo{author}{Branda, S.S.}, \bibinfo{author}{Vik, \AA.},
  \bibinfo{author}{Friedman, L.}, \bibinfo{author}{Kolter, R.},
  \bibinfo{year}{2005}.
\newblock \bibinfo{title}{Biofilms: the matrix revisited}.
\newblock \bibinfo{journal}{Trends in Microbiology} \bibinfo{volume}{13},
  \bibinfo{pages}{20--26}.
\newblock 
  \DOIprefix\doi{https://doi.org/10.1016/j.tim.2004.11.006}.
\bibitem[{Cairns et~al.(2014)Cairns, Hobley and Stanley-Wall}]{RN325}
\bibinfo{author}{Cairns, L.S.}, \bibinfo{author}{Hobley, L.},
  \bibinfo{author}{Stanley-Wall, N.R.}, \bibinfo{year}{2014}.
\newblock \bibinfo{title}{Biofilm formation by \textit{Bacillus subtilis}: new
  insights into regulatory strategies and assembly mechanisms}.
\newblock \bibinfo{journal}{Mol Microbiol} \bibinfo{volume}{93},
  \bibinfo{pages}{587--98}.
\newblock \DOIprefix\doi{10.1111/mmi.12697}.
\bibitem[{Chai et~al.(2008)Chai, Chu, Kolter and Losick}]{RN327}
\bibinfo{author}{Chai, Y.}, \bibinfo{author}{Chu, F.}, \bibinfo{author}{Kolter,
  R.}, \bibinfo{author}{Losick, R.}, \bibinfo{year}{2008}.
\newblock \bibinfo{title}{Bistability and biofilm formation in \textit{Bacillus
  subtilis}}.
\newblock \bibinfo{journal}{Mol Microbiol} \bibinfo{volume}{67},
  \bibinfo{pages}{254--63}.
\newblock \DOIprefix\doi{10.1111/j.1365-2958.2007.06040.x}.
\bibitem[{Cosby and Zuber(1997)}]{RN359}
\bibinfo{author}{Cosby, W.M.}, \bibinfo{author}{Zuber, P.},
  \bibinfo{year}{1997}.
\newblock \bibinfo{title}{Regulation of \textit{Bacillus subtilis} $\sigma^{\rm H}$
 (Spo0H) and AbrB in response to changes in external pH}.
\newblock \bibinfo{journal}{J Bacteriol} \bibinfo{volume}{179},
  \bibinfo{pages}{6778--87}.
\bibitem[{Donlan(2002)}]{RN369}
\bibinfo{author}{Donlan, R.M.}, \bibinfo{year}{2002}.
\newblock \bibinfo{title}{Biofilms: microbial life on surfaces}.
\newblock \bibinfo{journal}{Emerging infectious diseases} \bibinfo{volume}{8},
  \bibinfo{pages}{881--890}.
\newblock 
  \DOIprefix\doi{10.3201/eid0809.020063}.
\bibitem[{Flemming et~al.(2016)Flemming, Wingender, Szewzyk, Steinberg, Rice
  and Kjelleberg}]{RN373}
\bibinfo{author}{Flemming, H.C.}, \bibinfo{author}{Wingender, J.},
  \bibinfo{author}{Szewzyk, U.}, \bibinfo{author}{Steinberg, P.},
  \bibinfo{author}{Rice, S.A.}, \bibinfo{author}{Kjelleberg, S.},
  \bibinfo{year}{2016}.
\newblock \bibinfo{title}{Biofilms: an emergent form of bacterial life}.
\newblock \bibinfo{journal}{Nature Reviews Microbiology} \bibinfo{volume}{14},
  \bibinfo{pages}{563--575}.
\newblock 
  \DOIprefix\doi{10.1038/nrmicro.2016.94}.
\bibitem[{Fujikawa(1992)}]{RN235}
\bibinfo{author}{Fujikawa, H.}, \bibinfo{year}{1992}.
\newblock \bibinfo{title}{Periodic growth of \textit{Bacillus subtilis}
  colonies on agar plates}.
\newblock \bibinfo{journal}{Physica A: Statistical Mechanics and its
  Applications} \bibinfo{volume}{189}, \bibinfo{pages}{15--21}.
\newblock 
  \DOIprefix\doi{http://dx.doi.org/10.1016/0378-4371(92)90123-8}.
\bibitem[{Golding et~al.(1998)Golding, Kozlovsky, Cohen and Ben-Jacob}]{RN259}
\bibinfo{author}{Golding, I.}, \bibinfo{author}{Kozlovsky, Y.},
  \bibinfo{author}{Cohen, I.}, \bibinfo{author}{Ben-Jacob, E.},
  \bibinfo{year}{1998}.
\newblock \bibinfo{title}{Studies of bacterial branching growth using
  reaction-diffusion models for colonial development}.
\newblock \bibinfo{journal}{Physica A: Statistical Mechanics and its
  Applications} \bibinfo{volume}{260}, \bibinfo{pages}{510--554}.
\newblock 
  \DOIprefix\doi{http://dx.doi.org/10.1016/S0378-4371(98)00345-8}.
\bibitem[{Hall-Stoodley et~al.(2004)Hall-Stoodley, Costerton and
  Stoodley}]{RN229}
\bibinfo{author}{Hall-Stoodley, L.}, \bibinfo{author}{Costerton, J.W.},
  \bibinfo{author}{Stoodley, P.}, \bibinfo{year}{2004}.
\newblock \bibinfo{title}{Bacterial biofilms: from the natural environment to
  infectious diseases}.
\newblock \bibinfo{journal}{Nat Rev Microbiol} \bibinfo{volume}{2},
  \bibinfo{pages}{95--108}.
\newblock \DOIprefix\doi{10.1038/nrmicro821}.
\bibitem[{Hobley et~al.(2015)Hobley, Harkins, MacPhee and Stanley-Wall}]{RN341}
\bibinfo{author}{Hobley, L.}, \bibinfo{author}{Harkins, C.},
  \bibinfo{author}{MacPhee, C.E.}, \bibinfo{author}{Stanley-Wall, N.R.},
  \bibinfo{year}{2015}.
\newblock \bibinfo{title}{Giving structure to the biofilm matrix: an overview
  of individual strategies and emerging common themes}.
\newblock \bibinfo{journal}{FEMS Microbiol Rev} \bibinfo{volume}{39},
  \bibinfo{pages}{649--69}.
\newblock \DOIprefix\doi{10.1093/femsre/fuv015}.
\bibitem[{Itoh et~al.(1999)Itoh, Wakita, Matsuyama and Matsushita}]{RN301}
\bibinfo{author}{Itoh, H.}, \bibinfo{author}{Wakita, J.},
  \bibinfo{author}{Matsuyama, T.}, \bibinfo{author}{Matsushita, M.},
  \bibinfo{year}{1999}.
\newblock \bibinfo{title}{Periodic pattern formation of bacterial colonies}.
\newblock \bibinfo{journal}{Journal of the Physical Society of Japan}
  \bibinfo{volume}{68}, \bibinfo{pages}{1436--1443}.
\newblock 
  \DOIprefix\doi{10.1143/JPSJ.68.1436}.
\bibitem[{Kawasaki et~al.(1997)Kawasaki, Mochizuki, Matsushita, Umeda and
  Shigesada}]{RN257}
\bibinfo{author}{Kawasaki, K.}, \bibinfo{author}{Mochizuki, A.},
  \bibinfo{author}{Matsushita, M.}, \bibinfo{author}{Umeda, T.},
  \bibinfo{author}{Shigesada, N.}, \bibinfo{year}{1997}.
\newblock \bibinfo{title}{Modeling spatio-temporal patterns generated by
  \textit{Bacillus subtilis}}.
\newblock \bibinfo{journal}{J Theor Biol} \bibinfo{volume}{188},
  \bibinfo{pages}{177--85}.
\newblock \DOIprefix\doi{10.1006/jtbi.1997.0462}.
\bibitem[{Kearns and Losick(2005)}]{RN263}
\bibinfo{author}{Kearns, D.B.}, \bibinfo{author}{Losick, R.},
  \bibinfo{year}{2005}.
\newblock \bibinfo{title}{Cell population heterogeneity during growth of
  \textit{Bacillus subtilis}}.
\newblock \bibinfo{journal}{Genes Dev} \bibinfo{volume}{19},
  \bibinfo{pages}{3083--94}.
\newblock \DOIprefix\doi{10.1101/gad.1373905}.
\bibitem[{Kitsunezaki(1997)}]{RN258}
\bibinfo{author}{Kitsunezaki, S.}, \bibinfo{year}{1997}.
\newblock \bibinfo{title}{Interface dynamics for bacterial colony formation}.
\newblock \bibinfo{journal}{Journal of the Physical Society of Japan}
  \bibinfo{volume}{66}, \bibinfo{pages}{1544--1550}.
\newblock 
  \DOIprefix\doi{10.1143/JPSJ.66.1544}.
\bibitem[{Kobayashi(2007)}]{RN343}
\bibinfo{author}{Kobayashi, K.}, \bibinfo{year}{2007}.
\newblock \bibinfo{title}{Gradual activation of the response regulator DegU
  controls serial expression of genes for flagellum formation and biofilm
  formation in \textit{Bacillus subtilis}}.
\newblock \bibinfo{journal}{Mol Microbiol} \bibinfo{volume}{66},
  \bibinfo{pages}{395--409}.
\newblock \DOIprefix\doi{10.1111/j.1365-2958.2007.05923.x}.
\bibitem[{Kobayashi and Iwano(2012)}]{RN342}
\bibinfo{author}{Kobayashi, K.}, \bibinfo{author}{Iwano, M.},
  \bibinfo{year}{2012}.
\newblock \bibinfo{title}{BslA(YuaB) forms a hydrophobic layer on the surface
  of \textit{Bacillus subtilis} biofilms}.
\newblock \bibinfo{journal}{Mol Microbiol} \bibinfo{volume}{85},
  \bibinfo{pages}{51--66}.
\newblock \DOIprefix\doi{10.1111/j.1365-2958.2012.08094.x}.
\bibitem[{L\'{o}pez et~al.(2009)L\'{o}pez, Vlamakis and Kolter}]{RN106}
\bibinfo{author}{L\'{o}pez, D.}, \bibinfo{author}{Vlamakis, H.},
  \bibinfo{author}{Kolter, R.}, \bibinfo{year}{2009}.
\newblock \bibinfo{title}{Generation of multiple cell types in \textit{Bacillus
  subtilis}}.
\newblock \bibinfo{journal}{FEMS Microbiol Rev} \bibinfo{volume}{33},
  \bibinfo{pages}{152--63}.
\newblock 
  \DOIprefix\doi{10.1111/j.1574-6976.2008.00148.x}.
\bibitem[{Mimura et~al.(2000)Mimura, Sakaguchi and Matsushita}]{RN260}
\bibinfo{author}{Mimura, M.}, \bibinfo{author}{Sakaguchi, H.},
  \bibinfo{author}{Matsushita, M.}, \bibinfo{year}{2000}.
\newblock \bibinfo{title}{Reaction-diffusion modelling of bacterial colony
  patterns}.
\newblock \bibinfo{journal}{Physica A: Statistical Mechanics and its
  Applications} \bibinfo{volume}{282}, \bibinfo{pages}{283--303}.
\newblock 
  \DOIprefix\doi{http://dx.doi.org/10.1016/S0378-4371(00)00085-6}.
\bibitem[{Nakayama et~al.(2020)Nakayama, Takagi, Wakita, Shoji and Tasaki}]{RN389}
\bibinfo{author}{Nakayama, M.}, \bibinfo{author}{Takagi, I.}, 
\bibinfo{author}{Wakita, J.}, \bibinfo{author}{Shoji, W.}, 
\bibinfo{author}{Tasaki, S.}
\newblock \bibinfo{title}{Cell population migration cycle and response to environmental pH in \textit{Bacillus subtilis}}.
\newblock \bibinfo{journal}{bioRxiv}, 
\newblock \DOIprefix\doi{10.1101/2020.09.11.292474}.
\bibitem[{Shimada et~al.(2004)Shimada, Ikeda, Wakita, Itoh, Kurosu, Hiramatsu,
  Nakatsuchi, Yamazaki, Matsuyama and Matsushita}]{RN149}
\bibinfo{author}{Shimada, H.}, \bibinfo{author}{Ikeda, T.},
  \bibinfo{author}{Wakita, J.}, \bibinfo{author}{Itoh, H.},
  \bibinfo{author}{Kurosu, S.}, \bibinfo{author}{Hiramatsu, F.},
  \bibinfo{author}{Nakatsuchi, M.}, \bibinfo{author}{Yamazaki, Y.},
  \bibinfo{author}{Matsuyama, T.}, \bibinfo{author}{Matsushita, M.},
  \bibinfo{year}{2004}.
\newblock \bibinfo{title}{Dependence of local cell density on concentric ring
  colony formation by bacterial species \textit{Bacillus subtilis}}.
\newblock \bibinfo{journal}{Journal of the Physical Society of Japan}
  \bibinfo{volume}{73}, \bibinfo{pages}{1082--1089}.
\newblock \DOIprefix\doi{10.1143/jpsj.73.1082}.
\bibitem[{Shioi et~al.(1980)Shioi, Matsuura and Imae}]{RN247}
\bibinfo{author}{Shioi, J.I.}, \bibinfo{author}{Matsuura, S.},
  \bibinfo{author}{Imae, Y.}, \bibinfo{year}{1980}.
\newblock \bibinfo{title}{Quantitative measurements of proton motive force and
  motility in \textit{Bacillus subtilis}}.
\newblock \bibinfo{journal}{J Bacteriol} \bibinfo{volume}{144},
  \bibinfo{pages}{891--7}.
\bibitem[{Stoodley et~al.(2002)Stoodley, Sauer, Davies and Costerton}]{RN367}
\bibinfo{author}{Stoodley, P.}, \bibinfo{author}{Sauer, K.},
  \bibinfo{author}{Davies, D.G.}, \bibinfo{author}{Costerton, J.W.},
  \bibinfo{year}{2002}.
\newblock \bibinfo{title}{Biofilms as complex differentiated communities}.
\newblock \bibinfo{journal}{Annual Review of Microbiology}
  \bibinfo{volume}{56}, \bibinfo{pages}{187--209}.
\newblock 
  \DOIprefix\doi{10.1146/annurev.micro.56.012302.160705}.
\bibitem[{Tasaki et~al.(2017a)Tasaki, Nakayama and Shoji}]{RN355}
\bibinfo{author}{Tasaki, S.}, \bibinfo{author}{Nakayama, M.},
  \bibinfo{author}{Shoji, W.}, \bibinfo{year}{2017}a.
\newblock \bibinfo{title}{Morphologies of \textit{Bacillus subtilis}
  communities responding to environmental variation}.
\newblock \bibinfo{journal}{Dev Growth Differ} \bibinfo{volume}{59},
  \bibinfo{pages}{369--378}.
\newblock \DOIprefix\doi{10.1111/dgd.12383}.
\bibitem[{Tasaki et~al.(2017b)Tasaki, Nakayama and Shoji}]{RN285}
\bibinfo{author}{Tasaki, S.}, \bibinfo{author}{Nakayama, M.},
  \bibinfo{author}{Shoji, W.}, \bibinfo{year}{2017}b.
\newblock \bibinfo{title}{Self-organization of bacterial communities against
  environmental pH variation: Controlled chemotactic motility arranges cell
  population structures in biofilms}.
\newblock \bibinfo{journal}{PLoS One} \bibinfo{volume}{12},
  \bibinfo{pages}{e0173195}.
\newblock \DOIprefix\doi{10.1371/journal.pone.0173195}.
\bibitem[{Vlamakis et~al.(2013)Vlamakis, Chai, Beauregard, Losick and
  Kolter}]{RN323}
\bibinfo{author}{Vlamakis, H.}, \bibinfo{author}{Chai, Y.},
  \bibinfo{author}{Beauregard, P.}, \bibinfo{author}{Losick, R.},
  \bibinfo{author}{Kolter, R.}, \bibinfo{year}{2013}.
\newblock \bibinfo{title}{Sticking together: building a biofilm the
  \textit{Bacillus subtilis} way}.
\newblock \bibinfo{journal}{Nat. Rev. Microbiol.} \bibinfo{volume}{11},
  \bibinfo{pages}{157--168}.
\newblock 
  \DOIprefix\doi{10.1038/nrmicro2960}.
\bibitem[{Wakita et~al.(1994)Wakita, Komatsu, Nakahara, Matsuyama and
  Matsushita}]{RN238}
\bibinfo{author}{Wakita, J.}, \bibinfo{author}{Komatsu, K.},
  \bibinfo{author}{Nakahara, A.}, \bibinfo{author}{Matsuyama, T.},
  \bibinfo{author}{Matsushita, M.}, \bibinfo{year}{1994}.
\newblock \bibinfo{title}{Experimental investigation on the validity of
  population dynamics approach to bacterial colony formation}.
\newblock \bibinfo{journal}{Journal of the Physical Society of Japan}
  \bibinfo{volume}{63}, \bibinfo{pages}{1205--1211}.
\newblock 
  \DOIprefix\doi{10.1143/JPSJ.63.1205}.
\bibitem[{Wakita et~al.(2001)Wakita, Shimada, Itoh, Matsuyama and
  Matsushita}]{RN303}
\bibinfo{author}{Wakita, J.}, \bibinfo{author}{Shimada, H.},
  \bibinfo{author}{Itoh, H.}, \bibinfo{author}{Matsuyama, T.},
  \bibinfo{author}{Matsushita, M.}, \bibinfo{year}{2001}.
\newblock \bibinfo{title}{Periodic colony formation by bacterial species
  \textit{Bacillus subtilis}}.
\newblock \bibinfo{journal}{Journal of the Physical Society of Japan}
  \bibinfo{volume}{70}, \bibinfo{pages}{911--919}.
\newblock 
  \DOIprefix\doi{10.1143/JPSJ.70.911}.
\bibitem[{Wilks et~al.(2009)Wilks, Kitko, Cleeton, Lee, Ugwu, Jones, BonDurant
  and Slonczewski}]{RN179}
\bibinfo{author}{Wilks, J.C.}, \bibinfo{author}{Kitko, R.D.},
  \bibinfo{author}{Cleeton, S.H.}, \bibinfo{author}{Lee, G.E.},
  \bibinfo{author}{Ugwu, C.S.}, \bibinfo{author}{Jones, B.D.},
  \bibinfo{author}{BonDurant, S.S.}, \bibinfo{author}{Slonczewski, J.L.},
  \bibinfo{year}{2009}.
\newblock \bibinfo{title}{Acid and base stress and transcriptomic responses in
  \textit{Bacillus subtilis}}.
\newblock \bibinfo{journal}{Appl Environ Microbiol} \bibinfo{volume}{75},
  \bibinfo{pages}{981--90}.
\newblock 
  \DOIprefix\doi{10.1128/AEM.01652-08}.
\bibitem[{Yamazaki et~al.(2005)Yamazaki, Ikeda, Shimada, Hiramatsu, Kobayashi,
  Wakita, Itoh, Kurosu, Nakatsuchi, Matsuyama and Matsushita}]{RN184}
\bibinfo{author}{Yamazaki, Y.}, \bibinfo{author}{Ikeda, T.},
  \bibinfo{author}{Shimada, H.}, \bibinfo{author}{Hiramatsu, F.},
  \bibinfo{author}{Kobayashi, N.}, \bibinfo{author}{Wakita, J.},
  \bibinfo{author}{Itoh, H.}, \bibinfo{author}{Kurosu, S.},
  \bibinfo{author}{Nakatsuchi, M.}, \bibinfo{author}{Matsuyama, T.},
  \bibinfo{author}{Matsushita, M.}, \bibinfo{year}{2005}.
\newblock \bibinfo{title}{Periodic growth of bacterial colonies}.
\newblock \bibinfo{journal}{Physica D: Nonlinear Phenomena}
  \bibinfo{volume}{205}, \bibinfo{pages}{136--153}.
\newblock \DOIprefix\doi{10.1016/j.physd.2004.12.013}.

\end{thebibliography}

\end{document}